\newtheorem{theorem}{Theorem}[section]
\newtheorem{lemma}[theorem]{Lemma}
\newtheorem{proposition}[theorem]{Proposition}
\theoremstyle{remark}
\newtheorem{remark}[theorem]{Remark}
\theoremstyle{definition}
\newtheorem{definition}[theorem]{Definition}
\theoremstyle{example}
\newtheorem{example}[theorem]{Example}
\theoremstyle{notation}
\newtheorem{notation}[theorem]{Notation}
\newcommand{\bra}[1]{\langle#1|}
\newcommand{\ket}[1]{|#1\rangle}
\begin{document}

\title{Equivalence classes of coherent projectors in a Hilbert space with prime dimension: $Q$ functions and their Gini index}            
\author{A. Vourdas}
\affiliation{Department of Computer Science,\\
University of Bradford, \\
Bradford BD7 1DP, United Kingdom\\a.vourdas@bradford.ac.uk}

\begin{abstract}

Coherent subspaces spanned by a finite number of coherent states are introduced, in a quantum system with Hilbert space that has odd prime dimension $d$.
The set of all coherent subspaces is partitioned into equivalence classes, with $d^2$ subspaces in each class.
The corresponding coherent projectors within an equivalence class, have the `closure under displacements property' and also resolve the identity.
Different equivalence classes provide different granularisation of the Hilbert space, and they form a partial order `coarser' (and `finer').
In the case of a two-dimensional coherent subspace spanned by two coherent states, the corresponding projector (of rank $2$)  is different than the sum of the two projectors to the subspaces related to each of the 
two coherent states. We quantify this with `non-addditivity operators' which are a measure of  quantum interference in phase space, and also of the non-commutativity of the projectors.
Generalized $Q$ and $P$ functions of density matrices, which are based on coherent projectors in a given equivalence class, are introduced.
Analogues of the Lorenz values and the Gini index (which are popular quantities in Mathematical Economics) are used here to quantify 
the inequality in the distribution of the $Q$ function of a quantum state,
within the granular structure of the Hilbert space. A comparison is made between Lorenz values and the Gini index for the cases of coarse and also fine granularisation of the Hilbert space.
 Lorenz values require an ordering of the $d^2$ values of the $Q$ function of a density matrix, and this leads 
to the  ranking permutation of a density matrix, and to comonotonic density matrices (which have the same ranking permutation).
The Lorenz values are a superadditive function and the Gini index is a subadditive function (they are both  additive quantities for comonotonic density matrices).
Various examples demonstrate these ideas.
\end{abstract}
\maketitle

\section{Introduction}
Coherent states in quantum systems with infinite-dimensional Hilbert space have been studied for a long time\cite{A1,A2}. 
A generalisation of this is the concept of coherent subspaces that are spanned by a finite number of coherent states \cite{V2016}.
The projectors to the coherent subspaces (which we call coherent projectors) have properties analogous to coherent states 
(closure under displacement transformations, resolution of the identity, etc).

In this paper we consider quantum systems with variables in ${\mathbb Z}_d$ (the integers modulo $d$) and $d$-dimensional Hilbert space $H_d$.
 Coherent states in this context are defined in a similar way to the the case of infinite-dimensional Hilbert space,
 by acting with the displacement operators on some fiducial vector.
 The number of coherent states is in this case $d^2$.

 In this context we introduce coherent subspaces spanned by a finite number of coherent states, 
and the corresponding coherent projectors.
We then show that the set of all coherent subspaces is partitioned into equivalence classes, with $d^2$ subspaces in each class.
The corresponding coherent projectors within an equivalence class have the closure under displacements property and also they resolve the identity. In this sense each equivalence class provides a covering of the Hilbert space $H_d$.
Coherent subspaces form a granular structure of the Hilbert space $H_d$,  
and different equivalence classes form different granularisations which are partially ordered with a partial order `coarser' (and `finer').

A technical point here is that in order to calculate the number of equivalence classes (and also some other results) it is convenient to take 
the dimension $d$ to be a prime number (other than $2$). ${\mathbb Z}_d$ is in general a ring but for prime $d$ it is a field, and this simplifies many calculations.
Systems with prime dimension $d$ are known to have stronger properties, e.g., in relation to mutually unbiased bases (e.g., \cite{V1,F1,F2}), but this is a
different area unrelated to the present paper. Here the results for the granular structure of the Hilbert space $H_d$ and the equivalence classes, are simpler for the case of 
odd prime $d$. In future work it is of course interesting to generalise this structure to an arbitrary $d$.

In the case of a two-dimensional coherent subspace spanned by two coherent states, the corresponding projector is different than the sum of the two projectors to the subspaces related to each of the 
two coherent states. In order to quantify this we introduce `non-additivity operators' and study their properties.
Their sum within an equivalence class of projectors is zero, because constructive interference is equal to destructive interference.

Using the coherent projectors within a given equivalence class, we define generalised $Q$ and $P$ representations.
A Hermitian operator $\theta$ is described with $d^2$ values of the $Q$ function, and also with $d^2$ values of the $P$ function.
We then use analogues of the Lorenz values and the Gini index (which are popular quantities in Mathematical Economics \cite{Gini}) to quantify 
the inequality in the distribution of the $Q$ function of a quantum state, within the granular structure of the Hilbert space.
This is a novel way of characterising quantum states.
A comparison of  the Lorenz values and the Gini index with coarse versus  fine granular structures, is made in propositions \ref{pro250}, \ref{pro251}.

In order to calculate the Lorenz values we order  the $d^2$ values of the $Q$ function of a density matrix (from the `poorest' to the `richest'), and this leads 
to the concept of the ranking permutation, which indicates the location of the density matrix in the Hilbert space.
We have used this concept previously in refs \cite{C1,C2} in connection with  Choquet integrals.
We have also studied comonotonic density matrices which have the same ranking permutation, and in this sense they `live' in the same part of the Hilbert space and they are `physically similar'.
We note that comonotonicity and Choquet integrals  have been used extensively (with scalar quantities) in Mathematical Economics, Decision Theory, Artificial Inteligence, etc (e.g., \cite{D1,D2,D3,D4,D5}).
We have used them in a quantum context with matrices \cite{C1,C2}.

In the present paper we use the ranking permutation of a density matrix and comonotonicity, in connection with the Lorenz values and the Gini index for $Q$ functions.
For comonotonic density matrices both the Lorenz values and the Gini index are additive quantities (propositions \ref{pro66},\ref{pro67}).
 
 Work on coherence in the context of category theory and logic (e.g., \cite{L1,L2}) might also be linked to the present work, and further work is needed in this direction. 
 
\subsection{Main results}
 \begin{itemize}
\item
We introduce multi-dimensional coherent subspaces (spanned by a finite number of coherent states).
The set of all coherent subspaces is partitioned into equivalence classes, with $d^2$ subspaces in each class (proposition \ref{pro1}).
The coherent projectors in an equivalence class have the closure property under displacements in Eq.~(\ref{33}), and the resolution of the identity in Eq.~(\ref{2A}). 
\item
The projectors related to two-dimensional coherent subspaces spanned by two coherent states obey the `non-additivity' inequality in Eq.~(\ref{dfg}).
Motivated by this we introduce `non-additivity operators' in Eq.(\ref{rty}), which are related to quantum interference in phase space.
Their sum within an equivalence class of projectors is zero, because constructive interference is equal to destructive interference (Eq.~(\ref{bn})).
\item
The coherent subspaces within an equivalence class provide a covering of the Hilbert space $H_d$.
The various coverings form the partial order  'coarser' (definitions \ref{def10}, \ref{def11}, and proposition \ref{pro3}).
\item
Using the coherent projectors within a given equivalence class, we define generalised $Q$ and $P$ representations and study their properties (proposition \ref{pro135}).
\item
We introduce the ranking permutation of a density matrix which can be viewed as its postcode in the Hilbert space.
Density matrices that have the same ranking permutation are comonotonic.
Comonotonicity is an equivalence relation in the set of density matrices, and partition it into equivalence classes which are convex sets (proposition \ref{pro34}).
\item
We use  Lorenz values and the Gini index to quantify the inequality in the distribution of the $Q$ function of a quantum state,
within the granular structure of the Hilbert space.
A comparison of  the Lorenz values and the Gini index in the cases of coarse and fine granularisations, is made in propositions \ref{pro250}, \ref{pro251}.
The Lorenz values are a superadditive function and the Gini index is a subadditive function.
For comonotonic density matrices, they are both  additive quantities (propositions \ref{pro66}, \ref{pro67}).
 \end{itemize}

\subsection{Contents}
In section 2 we discuss briefly quantum systems with variables in ${\mathbb Z}_d$, in order to define the notation.
We also explain in a general context, that the projector $\Pi(H_1\vee H_2)$ to the disjunction of two subspaces $H_1, H_2$ is different than the sum
$\Pi(H_1)+\Pi(H_2)$ of the projectors to these subspaces.
We quantify this with the non-additivity operator ${\mathfrak d}(H_1,H_2)$ which is intimately related to the commutator 
$[\Pi(H_1), \Pi(H_2)]$ (Eq.~(\ref{17})).

In section 3 we introduce  coherent subspaces and coherent projectors.
In section 4 we consider subsets of the phase space ${\mathbb Z}_d\times {\mathbb Z}_d$ and their dispalecements.
This is background material for section 5 which introduces various  granular structures to the phase space and the corresponding granular structures  to the Hilbert space.
Each granular structure consists of subsets of the phase space which cover the full phase space, and also of the  corresponding coherent subspaces which cover the full Hilbert space.
There are various equivalence classes of these granular structures of the phase space and of the Hilbert space, which form a partial order `coarser' (and `finer').
They are discussed in propositions \ref{pro1},\ref{pro2},\ref{pro3}.

In section 6, we prove the closure under displacements and the resolution of the identity properties of the coherent projectors within an equivalence class (proposition \ref{pro12}).
These two properties justify the use of the term `coherent'.
In section 7,  we use coherent projectors to define generalized $Q$ and $P$ functions for Hermitian operators.
In subsection 7A we introduce the $\delta Q$ functions (related to the `non-addditivity operators') as a measure of quantum interference in phase space.

In sections 8,9 we use  Lorenz values and the Gini index to quantify how spread is the quantum state in the granular structure of the Hilbert space.
In section 8A we partition the set of all density matrices into equivalence classes of comonotonic density matrices and show in proposition \ref{pro34} that they are convex sets.
In section 8B we discuss the Lorenz values, and in section 8C the Lorenz operators. The Gini index is discussed in section 9.
Various examples in section 9B demonstrate these ideas.
We conclude in section 10 with a summary of our results.

\section{Background}
\subsection{Quantum systems with variables in ${\mathbb Z}_d$ with odd prime $d$}

We consider a quantum system with variables in ${\mathbb Z}_d$, where $d$ is an odd  prime number.
In this case ${\mathbb Z}_d$ is a field.
$|X;r\rangle$ where $r\in {\mathbb Z}_d$, is an orthonormal basis which we call position states (the $X$ in this notation is not a variable, but it simply indicates position states).
Through a Fourier transform we get another orthonormal basis that we call momentum states:
\begin{equation}
|{P};r\rangle=F|{X};r\rangle;\;\;\;\;
F=\frac{1}{\sqrt d}\sum _{r,s}\omega(rs)\ket{X;r}\bra{X;s};\;\;\;\;
\omega(r)=\exp \left (i\frac{2\pi r}{d}\right ).
\end{equation}
The displacement operators $Z^\alpha, X^\beta$ in the phase space ${\mathbb Z}_d\times {\mathbb Z}_d$, are given by
\begin{eqnarray}\label{2A}
&&Z^\alpha= \sum \omega(\alpha m)|X; m\rangle\bra{X;m}=\sum |P; m+\alpha \rangle \bra{P:m}\nonumber\\
&&X^\beta =\sum |X; m+\beta\rangle \bra{X; m}= \sum \omega(-m\beta)|P; m \rangle \bra{P;m}.
\end{eqnarray}
where $\alpha$, $\beta \in {\mathbb Z}_d$. 
General displacement operators are the unitary operators
\begin{equation}
D(\alpha, \beta)=Z^\alpha X^\beta \omega(-2^{-1}\alpha \beta);\;\;\;[D(\alpha, \beta)]^{\dagger}=D(-\alpha, -\beta);\;\;\;\alpha, \beta \in {\mathbb Z}_d
\end{equation}
The $2^{-1}$ exists in ${\mathbb Z}_d$ because $d$ is an odd integer.
It is known that
\begin{equation}
D(\alpha, \beta)D(\gamma, \delta)=D(\alpha+\gamma, \beta+\delta)\omega[2^{-1}(\alpha \delta-\beta \gamma)].
\end{equation}
Also for an arbitrary operator $\theta$\cite{V1}
\begin{equation}\label{1A}
\frac{1}{d}\sum _{\alpha, \beta} D(\alpha, \beta)\frac{\theta}{{\rm Tr}\theta}[D(\alpha, \beta)]^\dagger={\bf 1};\;\;\;{\rm Tr}\theta \ne 0.
\end{equation}
We simplify the notation, by using $i=(\alpha, \beta)\in {\mathbb Z}_d\times {\mathbb Z}_d$.
Then the displacement operators are denoted as $D(i)$.

\subsection{Non-additivity of quantum probabilities and quantum interference}
We consider a quantum system with positions and momenta in ${\mathbb Z}_d$ (the integers modulo $d$), and $d$-dimensional Hilbert space $H_d$.
Let $H_1, H_2$ be  subspaces of $H_d$ and $\Pi(H_1), \Pi(H_2)$ the corresponding projectors. We define the conjunction (logical AND) and disjunction (logical OR):
\begin{eqnarray}\label{bb}
H_1\wedge H_2=H_1\cap H_2;\;\;\;\;\;H_1\vee H_2={\rm span}(H_1 \cup H_2).
\end{eqnarray}
We denote as ${\cal O}$ the zero-dimensional subspace that contains the zero vector.

We also consider two partial orders $\overset {H}\prec$  and $\overset {P}\prec$  for Hilbert subspaces and projectors, correspondingly (which are isomorphic to each other). 
 $H_1 \overset {H}\prec H_2$ if $H_1$ is a subspace of $H_2$. Also  $\Pi(H_1) \overset {P}\prec \Pi(H_2)$ if  $H_1 \overset {H}\prec H_2$.
If $\rho$ is a density matrix, the $\Pi(H_1) \overset {P}\prec \Pi(H_2)$ implies that ${\rm Tr}[\rho \Pi(H_1)]\le {\rm Tr}[\rho \Pi(H_2)]$ for the corresponding quantum probabilities.

We note that in general
\begin{eqnarray}\label{qqq}
H_1\wedge H_2={\cal O}\;\;\rightarrow\;\;\Pi(H_1\vee H_2)\ne \Pi(H_1)+\Pi(H_2),
\end{eqnarray}
and this leads to the following result for the corresponding quantum probabilities:
\begin{eqnarray}\label{aaa}
{\rm Tr}[\rho \Pi(H_1\vee H_2)]\ne {\rm Tr}[\rho\Pi(H_1)]+{\rm Tr}[\rho\Pi(H_2)].
\end{eqnarray}
In general these projectors do not commute, and consequently these quantum probabilities need to be measured using different ensembles described by the same density matrix $\rho$.

In practical calculations the projector $ \Pi(H_1\vee H_2)$ can be constructed from $n$ linearly independent vectors $v_1,...,v_n$ that span $H_1\vee H_2$. Let ${\cal U}$ the $d\times n$ matrix $(v_1,...,v_n)$ which has as columns the vectors $v_1,...v_n$ (where $n\le d$).
The projector to the space  $H_1\vee H_2$ is
\begin{eqnarray}
\Pi(H_1\vee H_2)={\cal U}({\cal U}^\dagger {\cal U})^{-1}{\cal U}^\dagger.
\end{eqnarray}
Since the vectors are linearly independent, ${\rm rank} (A)=n$, and therefore the matrix $A^\dagger A$ is invertible.

The inequality in Eq.~(\ref{qqq}) is a result of quantum interference, as the following example shows.
\begin{example}
Let $H_1, H_2$ be one-dimensional subspaces that contain the (normalized) column vectors $v_1,v_2$, correspondingly.
In this case
${\cal U}=\begin{pmatrix}
v_1&v_2\\
\end{pmatrix}$ and 
\begin{eqnarray}
\Pi(H_1\vee H_2)=
\begin{pmatrix}
v_1&v_2\\
\end{pmatrix}
\begin{pmatrix}
1&v_1^{\dagger}v_2\\
v_2^{\dagger}v_1&1\\
\end{pmatrix}^{-1}
\begin{pmatrix}
v_1^\dagger\\
v_2^\dagger\\
\end{pmatrix}
\end{eqnarray}
The matrix ${\cal U}^\dagger {\cal U}$ contains the off-diagonal interference terms $v_1^{\dagger}v_2$ and $v_2^{\dagger}v_1$.
From this follows that
\begin{eqnarray}
\Pi(H_1\vee H_2)=\frac{1}{1-|v_1^{\dagger}v_2|^2}
\left [\Pi(H_1)+\Pi(H_2)-(v_1^{\dagger}v_2)v_1v_2^{\dagger}-(v_2^{\dagger}v_1)v_2v_1^{\dagger}\right ]
\end{eqnarray}
It is seen that the interference terms $v_1^{\dagger}v_2$ and $v_2^{\dagger}v_1$ are responsible for the inequality in Eq.(\ref{qqq}).
For orthogonal vectors $v_1, v_2$, we have no interference and Eq.(\ref{qqq}) becomes equality.
\end{example}

\begin{remark}

In classical Boolean logic  the conjunction (logical AND) and disjunction (logical OR) are defined as
\begin{eqnarray}
A\wedge B=A\cap B;\;\;\;\;\;A\vee B=A \cup B,
\end{eqnarray}
where $A,B$ are subsets of a set $\Omega$.
In contrast the quantum  logical OR in Eq.(\ref{bb}) is not just the union, but it contains superpositions of states in the two spaces, which leads to quantum interference.

Related to this is that in a classical context Kolmogorov probabilities obey the additivity relation:
\begin{eqnarray}\label{bbb}
A\cap B=\emptyset\;\;\rightarrow\;\;p(A\cup B)= p(A)+p(B).
\end{eqnarray}
From this follows the weaker relation
\begin{eqnarray}\label{bba}
A\subseteq B\;\;\rightarrow\;\;p(A)\le p(B).
\end{eqnarray}
Quantum probabilities obey the analogue of Eq.~(\ref{bba})
\begin{eqnarray}
H_1 \overset {H}\prec H_2\;\;\rightarrow\;\;\Pi(H_1) \overset {P}\prec \Pi(H_2),
\end{eqnarray}
but due to quantum interference the analogue of the stronger relation in Eq.~(\ref{bbb}) is not valid as we have seen in Eq.~(\ref{qqq}).

\end{remark}

\subsection{Non-additivity operators as a measure of non-commutativity and quantum interference}
In refs \cite{VV,V1} we have studied the following `non-additivity operator' that quantifies  the amount of quantum intereference, in the sense that it is the difference between the two sides in Eq.~(\ref{qqq}) which as we explained is due to quantum interference:
\begin{eqnarray}\label{567}
{\mathfrak d}(H_1,H_2)=\Pi(H_1\vee H_2)-\Pi(H_1)-\Pi(H_2) .
\end{eqnarray}
 We have shown there that it is related to the commutator of the projectors 
$\Pi(H_1), \Pi(H_2)$ as follows:
\begin{eqnarray}\label{17}
[\Pi(H_1), \Pi(H_2)]={\mathfrak d}(H_1,H_2)[\Pi(H_1)-\Pi(H_2)] ;\;\;\;{\rm Tr}[{\mathfrak d}(H_1,H_2)]=0.
\end{eqnarray}
It is seen that ${\mathfrak d}(H_1,H_2)=0$ (i.e., Eqs.~(\ref{qqq}),(\ref{aaa})  become equalities), if and only if the $\Pi(H_1), \Pi(H_2)$ commute.

The difference between quantum probabilities and classical Kolmogorov probabilities is the non-additivity in Eq.(\ref{aaa}), in contrast to the additivity in Eq.~(\ref{bbb}). It is related to the non-commutativity of the
projectors $\Pi(H_1), \Pi(H_2)$ in a quantum context. Only if the $\Pi(H_1), \Pi(H_2)$ commute, Eq.(\ref{aaa}) becomes equality and the additivity of Kolmogorov probabilities, also holds for quantum probabilities.
In this sense , the ${\mathfrak d}(H_1,H_2)$ is an alternative way to study the quantum non-commutativity.

The ${\mathfrak d}(H_1,H_2)$ is the first step in a ladder of operators defined through M\"obius transforms.
We have studied such operators in a different context in ref.\cite{V2017}, but here we are only interested in  ${\mathfrak d}(H_1,H_2)$.

\section{Coherent projectors}

Let $\ket{s}$ be a `fiducial vector' and $\Pi(0)=\ket{s}\bra{s}$ the corresponding fiducial projector.  
We consider the $d^2$ vectors $D(i)\ket{s}$, which are coherent states in a finite-dimensional Hilbert space \cite{V1}.
The term coherent indicates the properties in proposition \ref{pro12} below.
For a `generic' fiducial vector any $d$ of these states are linearly independent.
We conjecture that any state apart from the position and momentum states, is a generic fiducial vector.

We also consider the corresponding one-dimensional subspaces $H(\{i\})$, and the corresponding `displaced projectors'
\begin{eqnarray}\label{dis}
\Pi(\{i\})=D(i)\Pi(0)[D(i)]^\dagger=D(i)\ket{s}\bra{s}[D(i)]^\dagger;\;\;\;i\in {\mathbb Z}_d\times {\mathbb Z}_d
\end{eqnarray}

Let $A$ be a subset of the phase space ${\mathbb Z}_d\times {\mathbb Z}_d$ with cardinality $|A|$.
We consider the subspace $H(A)$ which is the disjunction of the subspaces $H(\{i\})$ with $i\in A$:
\begin{eqnarray}
H(A)=\bigvee _{i\in A}H(\{i\});\;\;\;A\subseteq{\mathbb Z}_d\times {\mathbb Z}_d
\end{eqnarray}
 $H(A)$ contains vectors of the type
\begin{eqnarray}
\ket{v}=\sum _{i\in A}\lambda _iD(i)\ket{s}.
\end{eqnarray}
$\Pi(A)$ is the corresponding projector, and we  note that
\begin{eqnarray}\label{dfg}
\Pi(A)\ne \sum _{i\in A}\Pi(\{i\}).
\end{eqnarray}
It is easily seen that
\begin{eqnarray}
H(A)\vee H(B)=H(A\cup B).
\end{eqnarray}
Since any $d$ of the  $D(i)\ket{s}$ are linearly independent, it follows that for any set with cardinality $d\le |A|\le d^2$, we get $H(A)=H_d$ and $\Pi(A)={\bf 1}$.

We note that
\begin{eqnarray}
&&{\rm Tr}[\Pi(A)]=|A|\;\;{\rm if}\;\;1\le |A|< d\nonumber\\
&&{\rm Tr}[\Pi(A)]=d\;\;{\rm if}\;\;d\le|A|\le d^2.
\end{eqnarray}
For $A=\emptyset$, we get $H(\emptyset)={\cal O}$ and  $\Pi(\emptyset)=0$.

We refer to $H(A)$ as coherent subspaces, and the $\Pi(A)$ as coherent projectors, because of their properties in proposition \ref{pro12} below.
In ref.\cite{V2016} we have studied coherent subspaces analogous to the above, in the harmonic oscillator context (that involves continuous variables).

In the context of coherent projectors associated with one-dimensional coherent subspaces, the non-additivity operator of Eq.(\ref{567}) is  
\begin{eqnarray}\label{rty}
{\mathfrak d}(\{i,j\})=\Pi(\{i,j\})-\Pi(\{i\})-\Pi(\{j\});\;\;\;i,j \in {\mathbb Z}_d\times {\mathbb Z}_d .
\end{eqnarray}
As we explained earlier, this operator quantifies interference and non-commutativity.

\section{Subsets of the phase space and their displacements}

Let ${\mathfrak N}(a)$ be the set that contains all subsets of the phase space ${\mathbb Z}_d\times {\mathbb Z}_d$ with cardinality $a$:
\begin{eqnarray}
{\mathfrak N}(a)=\{A\;|\;|A|=a\}.
\end{eqnarray}

Also let ${\mathfrak H}(a)$, ${\mathfrak M}(a)$  be the sets that contain the coherent subspaces $H(A)$ and  coherent  projectors $\Pi(A)$ with $A\in {\mathfrak N}(a)$
(of rank $a$), correspondingly:
\begin{eqnarray}
{\mathfrak H}(a)=\{H(A)\;|\;A\in {\mathfrak N}(a)\};\;\;\;{\mathfrak M}(a)=\{\Pi(A)\;|\;A\in {\mathfrak N}(a)\}.
\end{eqnarray}
The ${\mathfrak N}(a)$, ${\mathfrak H}(a)$, ${\mathfrak M}(a)$ have the same cardinality, which is:
\begin{eqnarray}
{\mathfrak n}(a)= 
\begin{pmatrix}
d^2 &a
\end{pmatrix}^T.
\end{eqnarray}
In the special case $a=1$ we get ${\mathfrak n}(1)=d^2$.

\begin{notation}
We consider the subset of the phase space
\begin{eqnarray}
    A=\{(a_1,\beta _1),...,(\alpha_k,\beta _k)\}\subseteq{\mathbb Z}_d\times {\mathbb Z}_d,
\end{eqnarray}
and let $i=(\gamma, \delta)$. We will use the notation
\begin{eqnarray}
A+i=\{(a_1+\gamma,\beta _1+\delta),...,(\alpha_k+\gamma,\beta _k+\delta)\}\subseteq{\mathbb Z}_d\times {\mathbb Z}_d.
\end{eqnarray}
For various $i\in {\mathbb Z}_d\times {\mathbb Z}_d$, the $A$ is displaced in phase space.
$A$ and $A+i$ have the same cardinality. 
\end{notation}
For a given $A$, when $i$ takes all the $d^2$ values of ${\mathbb Z}_d\times {\mathbb Z}_d$, we get  $d^2$ from the
${\mathfrak n}(|A|)$
subsets of the phase space with the same cardinality $|A|$ (but some of them might be equal to each other).
In other words the cardinality of the following set that contains  subsets of the phase space
\begin{eqnarray}
{\mathfrak S}(A)=\{A+i\;|\;i\in{\mathbb Z}_d\times {\mathbb Z}_d\},
\end{eqnarray}
might be less than $d^2$, as the following example shows.
\begin{example}\label{ex23}
We consider a quantum system with variables in  ${\mathbb Z}_3$ and the subsets of the phase space ${\mathbb Z}_3\times {\mathbb Z}_3$ with cardinality $3$. There are ${\mathfrak n}(3)=84$ such subsets. In particular, let
\begin{eqnarray}
  A=\{(0,0),(1,2),(2,1)\}
\end{eqnarray}
We consider the $A+i$ with all $9$ values of $i$, and we get only $3$ sets which are different from each other. 
\begin{eqnarray}\label{61A}
&&A+(0,0)=A+(1,2)=A+(2,1)=\{(0,0),(1,2),(2,1)\}\nonumber\\
&&A+(0,1)=A+(1,0)=A+(2,2)=\{(0,1),(1,0),(2,2)\}\nonumber\\
&&A+(0,2)=A+(1,1)=A+(2,0)=\{(0,2),(1,1),(2,0)\}.
\end{eqnarray}
We note that in the calculation of $A+(0,0)$ we get
\begin{eqnarray}\label{61B}
&&(0,0)+(0,0)=(0,0)\nonumber\\
&&(1,2)+(0,0)=(1,2)\nonumber\\
&&(2,1)+(0,0)=(2,1),
\end{eqnarray}
and in the calculation of $A+(1,2)$ we get
\begin{eqnarray}\label{61C}
&&(0,0)+(1,2)=(1,2)\nonumber\\
&&(1,2)+(1,2)=(2,1)\nonumber\\
&&(2,1)+(1,2)=(0,0).
\end{eqnarray}
Since the  order of the elements in the set $A$ is irrelevant, we get $A+(0,0)=A+(1,2)$.

Therefore $|{\mathfrak S}(A)|=3$.
In this example the cardinality of ${\mathfrak S}(A)$ is smaller than $d^2$.
\end{example}

\begin{lemma}\label{L1}
For odd prime $d$,
 if $|A|\le d-1$ then the cardinality of ${\mathfrak S}(A)$ is $d^2$.
 \end{lemma}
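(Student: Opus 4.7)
The plan is to translate the cardinality claim into a statement about the stabilizer of $A$ under the translation action. Note that $|{\mathfrak S}(A)| = d^2$ iff the $d^2$ translates $A+i$ are pairwise distinct, which by translation-invariance is equivalent to: $A+k=A$ implies $k=(0,0)$. So I would reduce to proving that the stabilizer $\mathrm{Stab}(A) = \{k \in \mathbb Z_d\times\mathbb Z_d : A+k=A\}$ is trivial whenever $|A|\le d-1$.

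Next I would argue by orbit counting. Suppose for contradiction that $k=(\gamma,\delta)\ne(0,0)$ satisfies $A+k=A$. Iterating gives $A+nk=A$ for every $n\in\mathbb Z$, so for each fixed $a\in A$ the entire orbit $\{a+nk : n\in\mathbb Z\}$ lies inside $A$. The size of this orbit equals the additive order of $k$ in the group $\mathbb Z_d\times\mathbb Z_d$. Here is where primality enters: since $d$ is prime and at least one of $\gamma,\delta$ is non-zero, the smallest positive $n$ with $n\gamma\equiv 0\pmod d$ and $n\delta\equiv 0\pmod d$ is $n=d$. Hence $k$ has order exactly $d$, the orbit of $a$ has $d$ distinct elements, and therefore $|A|\ge d$. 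This contradicts $|A|\le d-1$, so $\mathrm{Stab}(A)=\{(0,0)\}$ and all $d^2$ translates are distinct.

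The only place any subtlety enters is the orbit-size computation in the second step, which is exactly where the hypothesis that $d$ is prime is used. If $d$ were composite, a nonzero $k$ could have order strictly less than $d$ (for example $k=(d/p,0)$ has order $p$ for a prime divisor $p\mid d$), and then small translation-invariant sets such as the cyclic subgroup generated by $k$ would give counterexamples. So I do not anticipate any genuine obstacle; the content is entirely in recognizing that primality forces every nonzero element of $\mathbb Z_d\times\mathbb Z_d$ to have maximal additive order $d$, which immediately forces $|A|\ge d$ whenever $A$ has a nontrivial stabilizer.
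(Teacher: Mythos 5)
Your proposal is correct and is essentially the paper's own argument: the paper likewise deduces from $A+i=A+j$ that $A$ contains the full translation orbit $\alpha+k(i-j)$, $k=0,\dots,d-1$, and uses primality of $d$ (so that ${\mathbb Z}_d$ is a field and every nonzero translation has order $d$) to conclude these are $d$ distinct points, forcing $|A|\ge d$. The only difference is cosmetic: you phrase the argument in stabilizer/orbit language, while the paper carries out the same iteration concretely.
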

\begin{proof}
We first point out that if $(a_1,a_2)\in {\mathbb Z}_d\times {\mathbb Z}_d$  and $(i_1,i_2)$ takes all $d^2$ values in ${\mathbb Z}_d\times {\mathbb Z}_d$, then $(a_1+i_1,a_2+i_2$ also takes all $d^2$ values in ${\mathbb Z}_d\times {\mathbb Z}_d$.
Therefore if $|A|=1$, then the cardinality of ${\mathfrak S}(A)$ is $d^2$.
We have seen in example \ref{ex23}, that if $|A|\ge 2$ we can have 
\begin{eqnarray}
A+i=A+j;\;\;\;i=(i_1,i_2);\;\;\;j=(j_1,j_2),
\end{eqnarray}
because the order of the elements in the set $A$ is irrelevant.

We next assume that the cardinality of ${\mathfrak S}(A)$ is less than $d^2$, and we will prove that in this case $|A|\ge d$.
If $|{\mathfrak S}(A)|<d^2$,
there must be two different elements $\alpha=(\alpha_1, \alpha _2)$ and $\beta=(\beta _1, \beta _2)$ in $A$,  such that
\begin{eqnarray}
(\alpha_1, \alpha _2)+(i_1,i_2)=(\beta _1, \beta _2)+(j_1,j_2);\;\;\;i=(i_1,i_2);\;\;\;j=(j_1,j_2).
\end{eqnarray}
For example in Eqs.(\ref{61B}), (\ref{61C}),  
\begin{eqnarray}
(\alpha_1, \alpha _2)=(0,0);\;\;\;(i_1,i_2)=(0,0);\;\;\;(\beta _1, \beta _2)=(2,1);\;\;\;(j_1,j_2)=(1,2).
\end{eqnarray}

Therefore
\begin{eqnarray}
(\beta _1, \beta _2)=(\alpha_1+(i_1-j_1), \alpha _2+(i_2-j_2))\in A.
\end{eqnarray}
We apply the same argument to $(\alpha_1+(i_1-j_1), \alpha _2+(i_2-j_2))\in A$ and we conclude that the 
$(\alpha_1+2(i_1-j_1), \alpha_2+2(i_2-j_2))\in A$.
We continue in this way and we find that 
\begin{eqnarray}\label{21}
(\alpha+k (i_1-j_1), \beta+k (i_2-j_2))\in A;\;\;k=0,1,...,d-1.
\end{eqnarray}
In the case of prime $d$ the ${\mathbb Z}_d$ is a field, and consequently Eq.~(\ref{21}) has $d$ pairs different from each other, and the cardinality of $A$ is at least $d$.
In the case of non-prime $d$ the ${\mathbb Z}_d$ is a ring,  it has non-trivial divisors of zero, and  consequently Eq.~(\ref{21}) might have less than $d$ pairs different from each other.

We have assumed that the cardinality of ${\mathfrak S}(A)$ is less than $d^2$, and we proved that the cardinality of $A$ is at least $d$.
Therefore  if $|A|<d$, the cardinality of ${\mathfrak S}(A)$ has the maximum possible value which is $d^2$.
\end{proof}

\section{ Granularisations of the phase space and of the Hilbert space}
\subsection{Equivalence classes: different granular structures}

Below we define three equivalence relations.
They are isomorphic to each other, but we differentiate them with the letters $S, H, P$ which stand for sets, Hilbert spaces,  and projectors correspondingly.
Different equivalence classes define different granular structures of the phase space ${\mathbb Z}_d\times {\mathbb Z}_d$ and of the Hilbert space $H_d$.

\begin{definition}
\mbox{}
\begin{itemize}
\item[(1)]
Let $A,B\in {\mathfrak N}(a)$ (i.e., $|A|=|B|=a$).
$A\overset{S}\sim B$ if there exists $i\in{\mathbb Z}_d\times {\mathbb Z}_d$ such that $B=A+i$,  and therefore both subsets $A,B$ belong to the same set ${\mathfrak S}(A)$.
\item[(2)]
Let $H(A), H(B) \in {\mathfrak H}(a)$  (i.e., $|A|=|B|=a$). Then
$H(A)\overset {H} \sim H(B)$ if $A\overset {S}\sim B$.
\item[(3)]
Let $\Pi(A), \Pi(B) \in {\mathfrak M}(a)$  (i.e., $|A|=|B|=a$). Then
$\Pi(A)\overset {P} \sim \Pi(B)$ if $A\overset {S}\sim B$.
\end{itemize}
\end{definition}

\begin{proposition}\label{pro1}
For odd prime $d$ and $a\le d-1$:
\begin{itemize}
\item[(1)]
$\overset {S} \sim$ is an equivalence relation in the
set ${\mathfrak N}(a)$ of subsets of the phase space, which  partitions it
 into ${\mathfrak n}(a)/d^2$ equivalence classes denoted as  ${\mathfrak N}(\nu;a)$, with the equivalence class label $\nu$ taking the values $\nu=1,...,{\mathfrak n}(a)/d^2$.
Each equivalence class has $d^2$ elements.
 \item[(2)]
$\overset {H} \sim$ is an equivalence relation in the
set ${\mathfrak H}(a)$ of coherent subspaces, which  partitions it
 into ${\mathfrak n}(a)/d^2$ equivalence classes denoted as  ${\mathfrak H}(\nu;a)$, with with the equivalence class label $\nu$ taking the values $\nu=1,...,{\mathfrak n}(a)/d^2$.
Each equivalence class has $d^2$ elements.
 \item[(3)]
$\overset {P} \sim$ is an equivalence relation in the
set ${\mathfrak M}(a)$ of coherent projectors, which  partitions it
 into ${\mathfrak n}(a)/d^2$ equivalence classes denoted as  ${\mathfrak M}(\nu;a)$, with with the equivalence class label $\nu$ taking the values $\nu=1,...,{\mathfrak n}(a)/d^2$.
Each equivalence class has $d^2$ elements.

\end{itemize}
\end{proposition}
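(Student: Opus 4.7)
The three parts are isomorphic via the canonical assignments $A\mapsto H(A)\mapsto \Pi(A)$, and both $\overset{H}\sim$ and $\overset{P}\sim$ are defined by pullback of $\overset{S}\sim$. So the plan is to prove (1) in detail and then transfer the structure to (2) and (3). For (1), I would check the three axioms directly using the abelian group structure of ${\mathbb Z}_d\times {\mathbb Z}_d$: reflexivity follows from $A=A+(0,0)$; symmetry from $B=A+i \Rightarrow A=B+(-i)$; transitivity from $B=A+i$, $C=B+j \Rightarrow C=A+(i+j)$. Thus $\overset{S}\sim$ partitions ${\mathfrak N}(a)$, and each equivalence class is precisely ${\mathfrak S}(A)$ for any representative $A$ in it.

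Next I would invoke Lemma \ref{L1}: since $a=|A|\le d-1$, the cardinality of ${\mathfrak S}(A)$ is exactly $d^2$ for every representative $A$. Therefore every equivalence class has exactly $d^2$ elements. Since the total number of subsets of cardinality $a$ is ${\mathfrak n}(a)$, and the classes of equal size $d^2$ form a partition, the number of equivalence classes is ${\mathfrak n}(a)/d^2$. (In passing, this also shows $d^2\mid {\mathfrak n}(a)$, which is a pleasant divisibility by-product.)

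For (2) and (3), the idea is that the equivalence $A\overset{S}\sim B$ is pulled back by the maps $A\mapsto H(A)$ and $A\mapsto \Pi(A)$, which intertwine the displacement action in the sense that $D(i)\ket{v}\in H(A+i)$ for $\ket{v}\in H(A)$, and $D(i)\Pi(A)[D(i)]^\dagger=\Pi(A+i)$. Hence the images ${\mathfrak H}(\nu;a)$ and ${\mathfrak M}(\nu;a)$ of the equivalence classes ${\mathfrak N}(\nu;a)$ are orbits of ${\mathbb Z}_d\times {\mathbb Z}_d$ under the displacement action, and they again form partitions. Given injectivity of $A\mapsto H(A)$ and $A\mapsto \Pi(A)$ on each orbit, each equivalence class in ${\mathfrak H}(a)$ and ${\mathfrak M}(a)$ inherits cardinality $d^2$ and the number of classes is ${\mathfrak n}(a)/d^2$.

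The main obstacle is this injectivity: one needs that for distinct $A,B$ in the same orbit, the coherent subspaces $H(A), H(B)$ (equivalently the projectors $\Pi(A),\Pi(B)$) are actually distinct. This is where the genericity of the fiducial vector enters: since any $d$ of the states $D(i)\ket{s}$ are linearly independent and $|A|\le d-1$, the subspace $H(A)$ has dimension exactly $|A|$ and its support in the overcomplete family $\{D(i)\ket{s}\}$ recovers $A$; equivalently, $H(A)=H(B)$ would force a non-trivial linear dependence among at most $d$ coherent states, contradicting the generic assumption. With this step secured, parts (2) and (3) follow immediately from part (1).
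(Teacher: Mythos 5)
Your proof is correct, and its core for part (1) is the same as the paper's: verify the three equivalence axioms from the group structure of ${\mathbb Z}_d\times{\mathbb Z}_d$, identify each class with an orbit ${\mathfrak S}(A)$, and invoke Lemma \ref{L1} to get that every class has exactly $d^2$ elements. You differ from the paper in two worthwhile ways. First, the paper establishes the divisibility $d^2\mid{\mathfrak n}(a)$ \emph{up front}, by the number-theoretic fact that $\binom{m}{k}$ is divisible by $m/{\rm GCD}(m,k)$ (with $m=d^2$, $k=a$, ${\rm GCD}(d^2,a)=1$ for prime $d$ and $a<d$), and only then counts the classes; you instead obtain divisibility as a free corollary of partitioning ${\mathfrak n}(a)$ elements into classes of equal size $d^2$. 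Your route is leaner and makes the paper's arithmetic lemma a consistency check rather than a logical ingredient. Second, for parts (2) and (3) the paper simply says ``analogous arguments hold,'' which silently presupposes that the maps $A\mapsto H(A)$ and $A\mapsto\Pi(A)$ are injective on ${\mathfrak N}(a)$ (the same assumption is hidden in the paper's earlier assertion that ${\mathfrak N}(a)$, ${\mathfrak H}(a)$, ${\mathfrak M}(a)$ all have cardinality ${\mathfrak n}(a)$, and without it even the well-definedness of $\overset{H}\sim$ and $\overset{P}\sim$ as relations on sets of subspaces/projectors would be in doubt). You identify this as the crux and close it correctly: if $H(A)=H(B)$ with $A\ne B$, $|A|=|B|=a\le d-1$, then adjoining one coherent state indexed by $j\in B\setminus A$ gives $a+1\le d$ coherent states inside the $a$-dimensional space $H(A)$, a linear dependence contradicting the genericity of the fiducial vector. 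So your write-up is not just equivalent to the paper's proof; on parts (2) and (3) it is more complete.
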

\begin{proof}
\mbox{}
\begin{itemize}
\item[(1)]
We first point out that  ${\mathfrak n}(a)/d^2$ is an integer.
Indeed the general binomial coefficient  
$\begin{pmatrix}
m\\
k\\
\end{pmatrix}$ is divisible by $\frac{m}{{\rm GCD}(m,k)}$. 
In our case $m=d^2$ and $k=a$.
For prime $d$ and $a<d$, the ${\rm GCD}(d^2,a)=1$. Therefore the  ${\mathfrak n}(a)/d^2$ is an integer.

We next prove that $\overset {S}\sim$ is an equivalence relation. The following properties hold:
\begin{eqnarray}
&&A\overset {S}\sim A\nonumber\\
&&A\overset {S}\sim B\;\rightarrow\;B\overset {S}\sim A\nonumber\\
&&A\overset {S}\sim B\;{\rm and}\;B\overset {S}\sim C\;\rightarrow\;A\overset {S}\sim C.
\end{eqnarray}
Therefore $\overset {S}\sim$ is an equivalence relation that partitions the set ${\mathfrak N} (a)$.
The cardinality of ${\mathfrak N} (a)$ is ${\mathfrak n} (a)$, and we proved in lemma \ref{L1} that  in the case $a<d-1$ the cardinality of each equivalence class is $d^2$.
Therefore the number of equivalence classes is ${\mathfrak n}(a)/d^2$.
\item[(2)]
Analogous arguments to the first part, also hold for the equivalence relation $\overset {H}\sim$.

\item[(3)]
Analogous arguments to the first part, also hold for the equivalence relation $\overset {P}\sim$.

\end{itemize}
\end{proof}

\begin{remark}
In the special case $a=1$, the  ${\mathfrak N}(1)$, ${\mathfrak H}(1)$, ${\mathfrak M}(1)$ consist of one equivalence class.
Also for $|A|\ge d$ we get $\Pi(A)={\bf 1}$, and for this reason below we study the case $ a \le d-1$.
\end{remark}

\subsection{Coverings of the phase space and coverings of the Hilbert space}
\begin{proposition}\label{pro2}
Let $d$ be an odd prime and $a\le d-1$.
The  $d^2$ subsets within a given equivalence class ${\mathfrak N}(\nu;a)$, cover the phase space ${\mathbb Z}_d\times {\mathbb Z}_d$ in the sense that that their union is
 \begin{eqnarray}\label{4R}
\bigcup _{A\in {\mathfrak N}(\nu;a)}A={\mathbb Z}_d\times {\mathbb Z}_d.
\end{eqnarray}

Also the $d^2$ coherent subspaces  within a given equivalence class ${\mathfrak H}(\nu;a)$ 
(and the corresponding projectors in the equivalence class ${\mathfrak M}(\nu;a)$) cover the Hilbert space $H_d$, in the sense that their disjunction is
 \begin{eqnarray}\label{4RR}
\bigvee _{H(A)\in {\mathfrak H}(\nu;a)}H(A)=H_d.
\end{eqnarray}
\end{proposition}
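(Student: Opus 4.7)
The proof splits into the phase-space covering of Eq.~(\ref{4R}) and the Hilbert-space covering of Eq.~(\ref{4RR}), with the latter following from the former plus the linear-independence hypothesis on the fiducial vector.

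First I would fix a representative $A\in{\mathfrak N}(\nu;a)$, so that ${\mathfrak N}(\nu;a)={\mathfrak S}(A)=\{A+i\mid i\in{\mathbb Z}_d\times{\mathbb Z}_d\}$, which has cardinality $d^2$ by Lemma~\ref{L1}. To establish Eq.~(\ref{4R}), I would take an arbitrary $p\in{\mathbb Z}_d\times{\mathbb Z}_d$ and any element $\alpha\in A$ (such an $\alpha$ exists since $1\le a\le d-1$). Setting $i=p-\alpha\in{\mathbb Z}_d\times{\mathbb Z}_d$, we have $p=\alpha+i\in A+i$, so $p$ lies in one of the translates. Conversely every $A+i$ is trivially a subset of ${\mathbb Z}_d\times{\mathbb Z}_d$, so the two sides of Eq.~(\ref{4R}) are equal.

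For Eq.~(\ref{4RR}) I would use the distributive relation $H(A_1)\vee H(A_2)=H(A_1\cup A_2)$, stated in the excerpt, iterated over the $d^2$ members of the equivalence class:
\begin{equation}
\bigvee _{H(B)\in {\mathfrak H}(\nu;a)}H(B)\;=\;H\!\left(\bigcup _{B\in {\mathfrak N}(\nu;a)}B\right)\;=\;H({\mathbb Z}_d\times{\mathbb Z}_d),
\end{equation}
where the last equality is exactly Eq.~(\ref{4R}). It then remains to show $H({\mathbb Z}_d\times{\mathbb Z}_d)=H_d$, i.e.\ that the $d^2$ coherent states $\{D(i)\ket{s}\mid i\in{\mathbb Z}_d\times{\mathbb Z}_d\}$ span $H_d$. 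This is immediate from the generic-fiducial assumption stated earlier in the excerpt: any $d$ of these states are linearly independent, so in particular some $d$-element subfamily is a basis of $H_d$, forcing the full family to span $H_d$. This gives Eq.~(\ref{4RR}) for the coherent subspaces, and the projector version follows by applying $\Pi(\cdot)$ to both sides and noting that the projector to the full Hilbert space is the identity ${\bf 1}$.

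The only delicate point is the Hilbert-space step, and it is delicate only insofar as it relies on the generic-fiducial hypothesis; the combinatorial part of the argument is essentially a one-line translation argument once Lemma~\ref{L1} has supplied the fact that $|{\mathfrak S}(A)|=d^2$ for $a\le d-1$. No separate argument is needed for the ${\mathfrak M}(\nu;a)$ covering beyond the obvious correspondence $H(A)\leftrightarrow\Pi(A)$.
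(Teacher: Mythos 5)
Your proof is correct and follows essentially the same route as the paper's: the translation argument (any point $p$ of phase space is reached as $\alpha+i$ for suitable $i$) establishes Eq.~(\ref{4R}), and Eq.~(\ref{4RR}) then follows from the fact that the coherent states span $H_d$. Your version merely makes explicit the intermediate steps (the relation $H(A)\vee H(B)=H(A\cup B)$ and the generic-fiducial spanning argument) that the paper's proof invokes implicitly.
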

\begin{proof}
We consider any element $(a_1,a_2)\in A$.  We have explained earlier that as $(i_1,i_2)$ takes all $d^2$ values in ${\mathbb Z}_d\times {\mathbb Z}_d$, then $(a_1+i_1,a_2+i_2)$ also takes all $d^2$ values in ${\mathbb Z}_d\times {\mathbb Z}_d$.
This proves Eq.(\ref{4R}).
Also Eq.(\ref{4R}) and the fact that all coherent states span the whole Hilbert space $H_d$ proves Eq.(\ref{4RR}). 
\end{proof}
The $d^2$ coherent projectors in a given equivalence class  ${\mathfrak M}(\nu;a)$ cover the Hilbert space not only in the sense of  Eq.(\ref{4RR}) for the corresponding coherent subspaces, but more importantly (because it is stronger) in the sense 
of the resolution of the identity which will be given later in  Eq.~(\ref{2A}), and which imples that every state $\ket{s}$ in $H_d$ can be expanded as
\begin{eqnarray}\label{2A}
\ket{s}=\frac{1}{d|A|}\sum _i\Pi(A+i)\ket {s},
\end{eqnarray}
where $\Pi(A+i)\ket {s}$ are vectors in  ${\mathfrak H}(\nu;a)$.

\begin{remark}
We have introduced a granular structure for the phase space ${\mathbb Z}_d\times {\mathbb Z}_d$, with the cardinality of the sets $A$ defining the size of the grains.
Mathematically this  is not a partition of the phase space, but a covering of it.
Because for sets in an
equivalence classes ${\mathfrak N}(\nu;a)$ we might have non-zero overlaps ($A\cap B\ne \emptyset$).
Different equivalence classes correspond to different granular structures and lead to different coverings of the phase space.
Starting from a set within a given equivalence class and displacing it, we get the $d^2$ sets within that equivalence class.  

We have also introduced a  granular structure for the Hilbert space $H_d$, and in quantum measurements with the projectors $\Pi(A)$ the ${\rm Tr}[\Pi(A)]$ defines the size of the grains.
But again this  is not a partition of the Hilbert space, but a covering of it.
Because for subspaces in an
equivalence classes ${\mathfrak H}(\nu;a)$ we might have non-zero overlaps ($H(A)\wedge H(B)\ne{\cal O}$).
Different equivalence classes correspond to different granular structures and lead to different coverings of the Hilbert space.
Starting from a projector within a given equivalence class and displacing it as in Eq.(\ref{dis}), we get the $d^2$ projectors within that equivalence class.  
\end{remark}

\subsection{Partial ordering of the coverings: fine and coarse coverings}

\begin{definition}\label{def10}
The covering of the phase space ${\mathbb Z}_d\times {\mathbb Z}_d$ by the sets in the equivalence class ${\mathfrak N}(\nu;a)$ is finer than the covering  by the sets in the equivalence class ${\mathfrak N}(\mu;b)$, if for some set $A\in {\mathfrak N}(\nu;a)$
there exist a set $B\in {\mathfrak N}(\mu;b)$ such that $A\subseteq B$. We denote this as ${\mathfrak N}(\nu;a) \overset {S}\sqsubset {\mathfrak N}(\mu;b)$ (the superfix $S$ stands for sets).
In this case we also say that the covering by the sets in the equivalence class ${\mathfrak N}(\mu;b)$ is coarser than the covering  by the sets in the equivalence class ${\mathfrak N}(\nu;a)$.
\end{definition}
\begin{proposition}\label{pro3}
\mbox{}
\begin{itemize}
\item[(1)]
If ${\mathfrak N}(\nu;a) \overset {S} \sqsubset {\mathfrak N}(\mu;b)$ then there is a bijective map between the $d^2$ sets $A_i\in {\mathfrak N}(\nu;a)$ and the $d^2$ sets $B_i\in {\mathfrak N}(\mu;b)$ such that $A_i\subseteq B_i$.
\item[(2)]
$ \overset {S}\sqsubset$ is a partial order among the equivalence classes of subsets of the phase space.
\item[(3)]
${\mathfrak N}(1;1) \overset {S}\sqsubset {\mathfrak N}(\nu;a)\overset {S}\sqsubset {\mathfrak N}(\nu;d-1)$ and therefore
the finest and (non-trivial) coarsest  coverings of the phase space are by the sets in the equivalence classes  ${\mathfrak N}( 1;1)$ and ${\mathfrak N}(\nu;d-1)$, correspondingly.
\end{itemize}
\end{proposition}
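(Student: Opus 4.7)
The overall strategy is to exploit the fact that each equivalence class ${\mathfrak N}(\nu;a)$ with $a \le d-1$ is a single free orbit of the translation action of ${\mathbb Z}_d \times {\mathbb Z}_d$ on subsets of the phase space (by lemma \ref{L1}), together with the obvious fact that set inclusion is preserved under translation.

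For part (1), by the defining clause of $\overset{S}\sqsubset$ pick witnesses $A_0 \in {\mathfrak N}(\nu;a)$ and $B_0 \in {\mathfrak N}(\mu;b)$ with $A_0 \subseteq B_0$. Define $\phi : {\mathfrak N}(\nu;a) \to {\mathfrak N}(\mu;b)$ by $\phi(A_0 + i) = B_0 + i$ for $i \in {\mathbb Z}_d \times {\mathbb Z}_d$. Because $a,b \le d-1$, lemma \ref{L1} gives $|{\mathfrak S}(A_0)| = |{\mathfrak S}(B_0)| = d^2$, so the maps $i \mapsto A_0 + i$ and $i \mapsto B_0 + i$ are bijections from ${\mathbb Z}_d \times {\mathbb Z}_d$ onto the two equivalence classes (which by proposition \ref{pro1} have $d^2$ members each); hence $\phi$ is well-defined and bijective. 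Finally, translating both sides of $A_0 \subseteq B_0$ by $i$ gives $A_0 + i \subseteq B_0 + i$, i.e.\ $\phi$ respects inclusion.

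For part (2), I will check the three axioms. Reflexivity is immediate by taking $A = B$ in the definition. For antisymmetry, suppose ${\mathfrak N}(\nu;a)\overset{S}\sqsubset {\mathfrak N}(\mu;b)$ and ${\mathfrak N}(\mu;b)\overset{S}\sqsubset {\mathfrak N}(\nu;a)$; the two inclusions force $a \le b$ and $b \le a$, so $a = b$, and then any witness pair satisfies $A \subseteq B$ with $|A| = |B|$, whence $A = B$. The two equivalence classes therefore share a member and must coincide. For transitivity, apply part (1): given ${\mathfrak N}(\nu;a) \overset{S}\sqsubset {\mathfrak N}(\mu;b) \overset{S}\sqsubset {\mathfrak N}(\sigma;c)$ and any $A \in {\mathfrak N}(\nu;a)$, part (1) yields $B \in {\mathfrak N}(\mu;b)$ with $A \subseteq B$, and then $C \in {\mathfrak N}(\sigma;c)$ with $B \subseteq C$, so $A \subseteq C$ and ${\mathfrak N}(\nu;a) \overset{S}\sqsubset {\mathfrak N}(\sigma;c)$.

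For part (3), the finest direction is settled by picking any $(\alpha,\beta) \in A$ for an arbitrary $A \in {\mathfrak N}(\nu;a)$: the singleton $\{(\alpha,\beta)\}$ lies in ${\mathfrak N}(1;1)$ (the unique class for $a=1$, by the remark following proposition \ref{pro1}) and is contained in $A$. The coarsest direction is settled by choosing any $A$ with $|A|=a \le d-1$ and adjoining $d-1-a$ arbitrary points from $({\mathbb Z}_d \times {\mathbb Z}_d) \setminus A$ to obtain some $B$ with $|B| = d-1$; this $B$ lies in some equivalence class ${\mathfrak N}(\mu;d-1)$, and $A \subseteq B$ gives ${\mathfrak N}(\nu;a) \overset{S}\sqsubset {\mathfrak N}(\mu;d-1)$. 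The main conceptual step is part (1), where one must recognise that a single pointwise witness $A_0 \subseteq B_0$ automatically upgrades to a matched inclusion on all $d^2$ translates; everything else then reduces to counting and cardinality bookkeeping.
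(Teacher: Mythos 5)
Your proof is correct and follows essentially the same route as the paper: part (1) is the identical translation argument (a single witness inclusion $A_0\subseteq B_0$ propagates to all $d^2$ translates, with Lemma \ref{L1} guaranteeing bijectivity), while parts (2) and (3) simply fill in the routine details that the paper dismisses as ``easily seen.'' Your added care in spelling out well-definedness of $\phi$, antisymmetry via the cardinality argument, and the explicit construction of a coarser class of cardinality $d-1$ goes slightly beyond the paper's text but does not change the approach.
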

\begin{proof}
\mbox{}
\begin{itemize}
\item[(1)]
If ${\mathfrak N}(\nu;a)  \overset {S}\sqsubset {\mathfrak N}(\mu;b)$ then for some set $A\in {\mathfrak N}(\nu;a)$
there exist a set $B\in {\mathfrak N}(\mu;b)$ such that $A\subseteq B$.
We label the $d^2$ elements of ${\mathfrak N}(\nu;a)$ as $A_i=A+i$ and the  $d^2$ elements of ${\mathfrak N}(\mu;b)$ as $B_i=B+i$, where $i\in {\mathbb Z}_d\times {\mathbb Z}_d$.
The fact that $A\subseteq B$ implies that $A+i\subseteq B+i$.
We note that since $d$ is an odd prime and $a<d$, the map between $A_i$ and $ B_i$ is bijective. This proves the statement. 
\item[(2)]
It is easily seen that
\begin{eqnarray}
&&{\mathfrak N}(\nu;a) \overset {S} \sqsubset {\mathfrak N}(\nu;a)\nonumber\\
&&{\mathfrak N}(\nu;a)  \overset {S}\sqsubset {\mathfrak N}(\mu;b)\;\;{\rm and}\;\;{\mathfrak N}(\mu;b)  \overset {S}\sqsubset {\mathfrak N}(\nu;a)\;\;\rightarrow\;\;{\mathfrak N}(\nu;a)= {\mathfrak N}(\mu;b)\nonumber\\
&&{\mathfrak N}(\nu;a)  \overset {S}\sqsubset {\mathfrak N}(\mu;b)\;\;{\rm and}\;\;{\mathfrak N}(\mu;b)  \overset {S}\sqsubset {\mathfrak N}(\kappa;c)\;\;\rightarrow\;\;{\mathfrak N}(\nu;a) \overset {S}\sqsubset {\mathfrak N}(\kappa;c)
\end{eqnarray}
This proves that $ \overset {S}\sqsubset$ is a partial order among the equivalence classes  of subsets of the phase space.
\item[(3)]
It is easily seen that ${\mathfrak N}(1,1) \overset {S}\sqsubset {\mathfrak N}(\nu,a)$.
\end{itemize}
\end{proof}
\begin{definition}\label{def11}
\mbox{}
\begin{itemize}
\item[(1)]
The covering of the Hilbert space $H_d$ by the coherent subspaces in the equivalence class ${\mathfrak H}(\nu;a)$ is finer than the covering  by the coherent subspaces in the equivalence class ${\mathfrak H}(\mu;b)$, if 
${\mathfrak N}(\nu;a) \overset {S}\sqsubset {\mathfrak N}(\mu;b)$ for the corresponding equivalence classes of subsets of the phase space.
In this case there is a bijective map between the $d^2$ subspaces $H(A_i)\in {\mathfrak H}(\nu;a)$ and the $d^2$ subspaces $H(B_i)\in {\mathfrak H}(\mu;b)$ such that  $H(A_i)\overset {H}\prec H(B_i)$.
We denote this as ${\mathfrak H}(\nu;a) \overset {H}\sqsubset {\mathfrak H}(\mu;b)$.
\item[(2)]
The covering of the Hilbert space $H_d$ by the coherent  projectors in the equivalence class ${\mathfrak M}(\nu;a)$ is finer than the covering  by the coherent projectors in the equivalence class ${\mathfrak M}(\mu;b)$, if 
${\mathfrak N}(\nu;a) \overset {S}\sqsubset {\mathfrak N}(\mu;b)$ for the corresponding equivalence classes of subsets of the phase space.
In this case there is a bijective map between the $d^2$ projectors $\Pi(A_i)\in {\mathfrak M}(\nu;a)$ and the $d^2$ projectors $\Pi(B_i)\in {\mathfrak M}(\mu;b)$ such that  $\Pi(A_i)\overset {P}\prec \Pi(B_i)$.
We denote this as ${\mathfrak M}(\nu;a) \overset {P}\sqsubset {\mathfrak M}(\mu;b)$.
\end{itemize}
\end{definition}
Clearly $ \overset {H}\sqsubset $ and  $ \overset {P}\sqsubset $ are partial orders  isomorphic to $ \overset {S}\sqsubset $.
The finest covering of the Hilbert space is by the coherent subspaces in the equivalence class  ${\mathfrak H}( 1;1)$ (and by the corresponding coherent projectors in the equivalence class  ${\mathfrak M}( 1;1)$).
The (non-trivial) coarsest covering of the Hilbert space is by the coherent subspaces  in the equivalence class  ${\mathfrak H}( \nu;d-1)$ (and by the corresponding coherent projectors in the equivalence class  ${\mathfrak M}( \nu;d-1)$).

Finer coverings are closer to classical physics in the sense that they are close to the classical phase space.
Coarser coverings use larger subspaces which involve the concept of superposition, and in this sense they are closer to quantum physics. 

\section{Properties of the coherent projectors}

\begin{proposition}\label{pro12}
For odd prime $d$ and $a\le d-1$:
\begin{itemize}
\item[(1)]

The subspaces $H(A)$, $H(A+i)$ within a given equivalence class  ${\mathfrak H}(\nu;a)$ are isomorphic and they are related through the following map that involves displacement transformations:
\begin{eqnarray}
D(i)\ket{s}=\ket{t};\;\;\;\ket{s}\in H(A);\;\;\;\ket{t}\in H(A+i).
\end{eqnarray}
We denote this as $D(i)H(A)=H(A+i)$.

The subspaces in different equivalence classes within ${\mathfrak H}(a)$ are not related through  displacement transformations (although  they are isomorphic to each other because they have the same dimension) .
\item[(2)]
Within a given equivalence class  ${\mathfrak M}(\nu;a)$, the following closure property holds for coherent projectors  that involves displacement transformations:
\begin{eqnarray}\label{33}
D(i)\Pi(A)[D(i)]^\dagger=\Pi(A+i);\;\;\;i=(i_1,i_2)\in {\mathbb Z}_d\times {\mathbb Z}_d.
\end{eqnarray}
Projectors in different equivalence classes are not related with displacement transformations.
\item[(3)]
The $d^2$ coherent projectors in a given equivalence class  ${\mathfrak M}(\nu;a)$, resolve the identity as follows:
\begin{eqnarray}\label{2A}
\frac{1}{d|A|}\sum _i\Pi(A+i)={\bf 1};\;\;\;i=(i_1,i_2)\in {\mathbb Z}_d\times {\mathbb Z}_d.
\end{eqnarray}

In the special case $a=1$ we have only one equivalent class, and we get
\begin{eqnarray}
\frac{1}{d}\sum _i\Pi(i)={\bf 1};\;\;\;i=(i_1,i_2)\in {\mathbb Z}_d\times {\mathbb Z}_d.
\end{eqnarray}
\item[(4)]

 The following resolutions of the identity involves all  ${\mathfrak n}(a)$ coherent projectors in ${\mathfrak M}(a)$ (in all equivalent classes ${\mathfrak M}(\nu;a)$ with $\nu=1,...,{\mathfrak n}(a)/d^2$):
 \begin{eqnarray}\label{2AA}
\frac{d}{a{\mathfrak n}(a)}\sum _{\Pi(A)\in {\mathfrak M}(a)}\Pi(A)={\bf 1}
\end{eqnarray}
\end{itemize}
\end{proposition}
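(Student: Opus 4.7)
The plan is to prove the four parts in sequence: part (2) follows immediately from part (1), part (3) is an application of the general identity in Eq.~(\ref{1A}) combined with part (2), and part (4) comes from summing part (3) across all equivalence classes.

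For part (1), the key tool is the composition rule $D(i)D(j)=D(i+j)\,\omega[2^{-1}(i_1j_2-i_2j_1)]$. A generic vector in $H(A)$ has the form $\sum_{j\in A}\lambda_j D(j)\ket{s}$; applying $D(i)$ and absorbing the resulting scalar phases into redefined coefficients produces $\sum_{k\in A+i}\lambda'_k D(k)\ket{s}\in H(A+i)$, so $D(i)H(A)\subseteq H(A+i)$. Applying the same reasoning to $D(-i)=[D(i)]^\dagger$ (up to a global phase that cancels) gives the reverse inclusion and hence equality. For the negative statement about different equivalence classes, I would argue by contradiction: if some displacement $D(j)$ mapped $H(A)$ onto $H(B)$ with $|A|=|B|\le d-1$, then by what was just shown $H(B)=H(A+j)$, and the genericity assumption on the fiducial vector (stated in Section 3, which ensures that $A\mapsto H(A)$ is injective on sets of cardinality $\le d-1$) forces $B=A+j$, placing $A$ and $B$ in the same equivalence class.

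Part (2) is then an immediate corollary: $\Pi(A+i)$ is the unique orthogonal projector onto $H(A+i)=D(i)H(A)$, and since $D(i)$ is unitary that projector equals $D(i)\Pi(A)[D(i)]^\dagger$. The same injectivity argument used in (1) shows that no displacement conjugates a projector in one equivalence class to one in another.

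For part (3), I would specialise the general resolution in Eq.~(\ref{1A}), valid for any operator $\theta$ with ${\rm Tr}\theta\ne 0$, to the choice $\theta=\Pi(A)$. Since $|A|\le d-1$ the trace formula recalled just before the proposition gives ${\rm Tr}\Pi(A)=|A|=a$, and substituting $D(i)\Pi(A)[D(i)]^\dagger=\Pi(A+i)$ from part (2) reproduces the claimed identity $\frac{1}{d|A|}\sum_i\Pi(A+i)={\bf 1}$. For part (4) I would decompose the sum over ${\mathfrak M}(a)$ into its ${\mathfrak n}(a)/d^2$ equivalence classes (proposition \ref{pro1}); each class contributes $da\,{\bf 1}$ by part (3), so the total is $(a\,{\mathfrak n}(a)/d)\,{\bf 1}$, which after the stated normalisation equals ${\bf 1}$. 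The only delicate point in the whole argument is the injectivity of $A\mapsto H(A)$ needed for the ``not related by displacements'' claims in (1) and (2); this rests on the genericity conjecture for $\ket{s}$, which I would simply invoke rather than attempt to prove.
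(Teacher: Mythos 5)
Your proposal is correct and follows essentially the same route as the paper: part (1) by displacing the spanning vectors $D(j)\ket{s}$ and absorbing phases via the composition rule, part (2) as an immediate consequence of (1), part (3) by specialising Eq.~(\ref{1A}) to $\theta=\Pi(A)$ with ${\rm Tr}\,\Pi(A)=a$, and part (4) by summing the class-wise resolutions over all $\nu$. If anything, you are more careful than the paper on the ``not related by displacements'' claims, where the paper silently uses the injectivity of $A\mapsto H(A)$ that you explicitly derive from the genericity of the fiducial vector.
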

\begin{proof}
\mbox{}
\begin{itemize}
\item[(1)]
By definition $H(A)$ is the subspace spanned by the vectors $D(j)\ket{s}$ with $j\in A$.
Acting on these vectors with $D(i)$ we get the vectors $\lambda_{ij}D(i+j)\ket{s}$ (where $\lambda _{ij}$ is a phase factor) which by definition span the space $H(A+i)$.

\item[(2)]
The fact that $D(i)H(A)=H(A+i)$ leads to  Eq.~(\ref{33}).

For projectors $\Pi(A), \Pi(B)$ in different equivalence classes, there is no $i\in  {\mathbb Z}_d\times {\mathbb Z}_d$ such that $A+i=B$.

Using the first of Eqs.(\ref{33}), we easily prove the second one. 
\item[(3)]
We use Eq.(\ref{1A}) with $\theta=\Pi(A)$, ${\rm Tr} \theta=|A|$, and taking into account Eq.(\ref{33}) we prove the first of Eqs.(\ref{2A}).
The second equation follows immediately from the first one.

\item[(4)]

We add Eqs.(\ref{2A}) for all $\nu$, and we get  Eq.~(\ref{2AA}).

\end{itemize}

\end{proof}

\begin{proposition}
\mbox{}
\begin{itemize}
\item[(1)]
Within a given equivalence class  ${\mathfrak M}(\nu;a)$, the 
\begin{eqnarray}\label{41}
{\rm Tr}[\Pi(A+i)\Pi(A+j)]=\xi_{\nu,a}(i-j);\;\;\;i,j\in {\mathbb Z}_d\times {\mathbb Z}_d.
\end{eqnarray}
depends only on the difference $i-j$ (it does not depend on $i+j$).
\item[(2)]
\begin{eqnarray}\label{kk}
\sum _j\xi_{\nu,a}(i-j)=da^2;\;\;\;j\in {\mathbb Z}_d\times {\mathbb Z}_d.
\end{eqnarray}
In the special case $a=1$ we get
\begin{eqnarray}
\sum _j\xi_{1,1}(i-j)=d;\;\;\;j\in {\mathbb Z}_d\times {\mathbb Z}_d.
\end{eqnarray}
\end{itemize}

\end{proposition}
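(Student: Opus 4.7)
The plan is to derive both parts directly from the two main structural properties of coherent projectors already proved in proposition \ref{pro12}: the closure under displacements (Eq.~(\ref{33})) for part (1), and the resolution of the identity (Eq.~(\ref{2A})) for part (2).

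For part (1), I would start with ${\rm Tr}[\Pi(A+i)\Pi(A+j)]$ and insert $D(-i)^\dagger D(-i)={\bf 1}$ between the two projectors, then use cyclicity of the trace to conjugate both projectors by $D(-i)$. Applying Eq.~(\ref{33}) (with displacement $-i$) gives $D(-i)\Pi(A+i)[D(-i)]^\dagger=\Pi(A)$ and $D(-i)\Pi(A+j)[D(-i)]^\dagger=\Pi(A+j-i)$. Hence
\begin{equation*}
{\rm Tr}[\Pi(A+i)\Pi(A+j)]={\rm Tr}[\Pi(A)\Pi(A+(j-i))],
\end{equation*}
which depends only on the difference $i-j$ (up to sign, which is harmless once one chooses the sign convention in the definition of $\xi_{\nu,a}$). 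This justifies writing the quantity as $\xi_{\nu,a}(i-j)$.

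For part (2), I would take the resolution of the identity in Eq.~(\ref{2A}), namely $\frac{1}{d|A|}\sum_j\Pi(A+j)={\bf 1}$, multiply on the left by $\Pi(A+i)$, and take the trace. The right-hand side becomes ${\rm Tr}[\Pi(A+i)]=|A|=a$, since for $a\le d-1<d$ the projector has rank $a$. The left-hand side, using the notation established in part (1), equals $\frac{1}{da}\sum_j\xi_{\nu,a}(i-j)$. Rearranging gives $\sum_j\xi_{\nu,a}(i-j)=da^2$, and the special case $a=1$ then specialises to $\sum_j\xi_{1,1}(i-j)=d$.

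I do not anticipate any real obstacle: both parts are essentially one-line consequences of the closure and resolution-of-identity properties. The only thing to be careful about is the sign/ordering convention in the argument of $\xi_{\nu,a}$ (i.e.\ $i-j$ versus $j-i$), and the fact that ${\rm Tr}[\Pi(A)]=a$ rather than $d$, which is guaranteed by the hypothesis $a\le d-1$ already stated in the proposition.
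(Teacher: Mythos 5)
Your proposal is correct and follows essentially the same route as the paper: part (1) via displacement covariance (Eq.~(\ref{33})) combined with cyclicity of the trace, and part (2) by summing Eq.~(\ref{41}) over $j$ against the resolution of the identity (Eq.~(\ref{2A})), using ${\rm Tr}[\Pi(A+i)]=a$. Your side remark on the sign convention is also fine, since $\xi_{\nu,a}(k)=\xi_{\nu,a}(-k)$ follows from the same conjugation-plus-cyclicity argument.
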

\begin{proof}
\mbox{}
\begin{itemize}
\item[(1)]
It is easily seen that
\begin{eqnarray}
&&{\rm Tr}[\Pi(A+i)\Pi(A+j)]={\rm Tr}[D(i)\Pi(A)D(-i)D(j)\Pi(A)D(-j)]\nonumber\\&&={\rm Tr}[D(i-j)\Pi(A)D(-i+j)\Pi(A)]={\rm Tr}[\Pi(A+i-j)\Pi(A)]
\end{eqnarray}
Therefore this trace depends only on $i-j$.
\item[(2)]
Eq.(\ref{kk}) is proved by taking the summation over $j$ in Eq.(\ref{41}) and using Eq.(\ref{2A}).
\end{itemize}
\end{proof}

The following proposition is for the case that $|A|=2$.
It shows that the average of the non-additivity operator ${\mathfrak d}_{\nu}(\{i_1,i_2\})$ within an equivalence class, is zero.
This can be interpreted as the fact that constructive interference is equal to the destructive interference and the average result within an equivalence class, is zero.

\begin{proposition}
Let $d$ be an  odd prime and $a\le d-1$.
We denote as  ${\mathfrak d}_{\nu}(\{i,j\})$ the $d^2$ non-additivity operators ${\mathfrak d}(\{i,j\})$ in Eq.(\ref{rty}) with $\{i,j\}$ in the equivalence class  ${\mathfrak N}(\nu;2)$.
Then :
\begin{eqnarray}\label{bn}
\sum {\mathfrak d}_{\nu}(\{i,j\})=0;\;\;\;\{i,j\}\in {\mathfrak N}(\nu;2).
\end{eqnarray}
The summation involves all $d^2$ sets in the equivalence class ${\mathfrak N}(\nu;2)$.

\end{proposition}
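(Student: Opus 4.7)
The plan is to unpack the definition of $\mathfrak{d}_\nu(\{i,j\})$ and reduce the statement to two applications of the resolution of the identity from Eq.~(\ref{2A}), one at level $a=2$ and one at level $a=1$. Writing
\begin{equation}
\sum _{\{i,j\}\in {\mathfrak N}(\nu;2)}{\mathfrak d}_{\nu}(\{i,j\})=\sum _{\{i,j\}\in {\mathfrak N}(\nu;2)}\Pi(\{i,j\})-\sum _{\{i,j\}\in {\mathfrak N}(\nu;2)}\bigl[\Pi(\{i\})+\Pi(\{j\})\bigr],
\end{equation}
I would handle the two pieces separately.

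For the first piece, I apply the resolution of the identity (Eq.~(\ref{2A})) to the equivalence class $\mathfrak{M}(\nu;2)$. Since $|A|=2$, this immediately gives $\sum_{\{i,j\}\in \mathfrak{N}(\nu;2)}\Pi(\{i,j\})=2d\,\mathbf{1}$. For the second piece I perform a counting argument: pick a representative $A=\{\alpha,\beta\}\in {\mathfrak N}(\nu;2)$, so that the class consists of $A+i=\{\alpha+i,\beta+i\}$ for $i\in {\mathbb Z}_d\times {\mathbb Z}_d$, and by Lemma~\ref{L1} (since $2\le d-1$) these $d^2$ sets are all distinct. For any point $k\in {\mathbb Z}_d\times {\mathbb Z}_d$, the condition $k\in A+i$ means $i=k-\alpha$ or $i=k-\beta$; since $\alpha\ne\beta$ these are two distinct values of $i$, so $k$ lies in exactly two of the sets $A+i$. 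Hence the rank-one projector $\Pi(\{k\})$ appears exactly twice in the double sum, giving
\begin{equation}
\sum _{\{i,j\}\in {\mathfrak N}(\nu;2)}\bigl[\Pi(\{i\})+\Pi(\{j\})\bigr]=2\sum _{k\in {\mathbb Z}_d\times {\mathbb Z}_d}\Pi(\{k\})=2d\,\mathbf{1},
\end{equation}
where the last equality uses the $a=1$ resolution of the identity from Eq.~(\ref{2A}). Subtracting yields zero, as claimed.

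The only non-routine step is the multiplicity count, and the key point there is to invoke Lemma~\ref{L1} to ensure the $d^2$ translates $A+i$ are genuinely distinct (so no overcounting occurs) together with the observation that $\alpha\neq\beta$ forces $k-\alpha\neq k-\beta$. Neither step requires any spectral information about the individual operators $\mathfrak{d}_\nu(\{i,j\})$; the identity is purely a combinatorial consequence of the two resolutions of identity combined with the way points of phase space are distributed among the members of a single equivalence class.
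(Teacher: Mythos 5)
Your proof is correct and follows essentially the same route as the paper: both reduce the claim to the resolution of the identity applied at $a=2$ (giving $2d\,\mathbf{1}$) and at $a=1$ (giving $2d\,\mathbf{1}$ for the singleton terms). Your per-point multiplicity count is just a reorganization of the paper's bookkeeping, which parametrizes the class as $\{i+k,j+k\}$ and notes that each slot sweeps the phase space once, so $\sum_k\Pi(\{i+k\})=\sum_k\Pi(\{j+k\})=d\,\mathbf{1}$.
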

\begin{proof}

Within a given equivalence class ${\mathfrak M}(\nu;2)$
\begin{eqnarray}
\sum _k\Pi(\{i,j\}+k)=2d{\bf 1}.
\end{eqnarray}
Also 
\begin{eqnarray}
\sum _k\Pi(i+k)=\sum _k\Pi(j+k)=d{\bf 1}.
\end{eqnarray}
Therefore
\begin{eqnarray}
\sum _k[\Pi(\{i,j\}+k)-\Pi(i+k)-\Pi(j+k)]=0.
\end{eqnarray}
This proves Eq.(\ref{bn}).

\end{proof}

\section{$Q$ and $P$ functions  based on coherent projectors in a given equivalence class} 
Given a Hermitian operator $\theta$ and an equivalence class ${\mathfrak N}(\nu, a)$, we define the generalized $Q$-function as the following set of $d^2$  numbers:
 \begin{eqnarray}\label{Q}
{\mathfrak Q}[\theta, {\mathfrak N}(\nu;a)]=\{Q_{\nu,a}(A;\theta)\;|\;A\in {\mathfrak N}(\nu; a)\};\;\;\;Q_{\nu,a}(A;\theta)={\rm Tr}[\theta \Pi(A)].
\end{eqnarray}
In particular for a density matrix $\rho$
 \begin{eqnarray}\label{Q}
Q_{\nu,a}(A;\rho)={\rm Tr}[\rho \Pi(A)];\;\;\;A\in {\mathfrak N}(\nu; a).
\end{eqnarray}
$Q_{\nu,a}(A;\rho)$ are the probabilities that measurement with the projector $\Pi(A)$ on the system described by $\rho$,  will give the outcome `yes'.
We note that the set ${\mathfrak Q}[\rho, {\mathfrak N}(\nu; a)]$ with appropriate normalization is a quasi-probability distribution, because the $\Pi(A)$ do not commute with each other.

The partial order $\overset {P}\sqsubset$ endows a partial order  $\overset {Q}\sqsubset$ on the sets ${\mathfrak Q}[\rho, {\mathfrak N}(\nu; a)]$. 
\begin{definition}
The $Q$ function ${\mathfrak Q}[\rho, {\mathfrak N}(\nu; a)]$ is finer than ${\mathfrak Q}[\rho, {\mathfrak N}(\mu; b)]$ 
if ${\mathfrak N}(\nu;a) \overset {S}\sqsubset {\mathfrak N}(\mu;b)$.
 In this case there is a bijective map between $Q_{\nu,a}(A_i;\rho)\in {\mathfrak Q}[\rho, {\mathfrak N}(\nu;a)]$ and $Q_{\mu,b}(B_i;\rho)\in {\mathfrak Q}[\rho, {\mathfrak N}(\mu;b)]$  such that $Q_{\nu,a}(A_i;\theta)\le Q_{\mu,b}(B_i;\theta)$.
We denote this as 
 ${\mathfrak Q}[\rho, {\mathfrak N}(\nu; a)]\overset {Q}\sqsubset {\mathfrak Q}[\rho, {\mathfrak N}(\mu; b)]$.
\end{definition}
A Hermitian operator $\theta$ can be written in terms of the $d^2$ projectors in an equivalence class ${\mathfrak M}(\nu;a)$ as
\begin{eqnarray}\label{ff}
\theta=\sum _iP_{\nu,a}(A+i;\theta)\Pi(A+i);\;\;\;i=(i_1,i_2)\in {\mathbb Z}_d\times {\mathbb Z}_d.
\end{eqnarray}
Indeed from this follows the following system of $d^2$ equations with $d^2$ unknowns:
\begin{eqnarray}
\bra{X;k}\theta\ket{X;\ell}=\sum _iP_{\nu,a}(A+i;\theta)\bra{X;k}\Pi(A+i)\ket{X;\ell};\;\;\;i=(i_1,i_2)\in {\mathbb Z}_d\times {\mathbb Z}_d;\;\;\;k,\ell \in {\mathbb Z}_d.
\end{eqnarray}
We solve this and we get the coefficients $P_{\nu,a}(A+i;\theta)$ which are the $P$-representation of the Hermitian operator $\theta$ in the present context. 

\begin{proposition}\label{pro135}
Let  ${\mathfrak M}(\nu;a)$ be an equivalence class with $a\le d-1$, $\theta$ a Hermitian operator and $\rho$ a density matrix. The following relations that involve  summations over all $d^2$ sets in the equivalence class ${\mathfrak N}(\nu;a)$, hold:
\begin{itemize}
\item[(1)]
\begin{eqnarray}\label{62}
{\rm Tr}(\theta)=\frac{1}{da}\sum _iQ_{\nu,a}(A+i;\theta)=a\sum _iP_{\nu,a}(A+i;\theta);\;\;\;i=(i_1,i_2)\in {\mathbb Z}_d\times {\mathbb Z}_d.
\end{eqnarray}
\item[(2)]
\begin{eqnarray}\label{zzz}
{\rm Tr}(\rho\theta)=\sum _iP_{\nu,a}(A+i;\theta)Q_{\nu,a}(A+i;\rho);\;\;\;i=(i_1,i_2)\in {\mathbb Z}_d\times {\mathbb Z}_d.
\end{eqnarray}
\item[(3)]
\begin{eqnarray}\label{zz}
Q_{\nu,a}(A;\rho)=\sum _iP_{\nu,a}(A+i;\theta)\xi_{\nu,a}(i);\;\;\;i=(i_1,i_2)\in {\mathbb Z}_d\times {\mathbb Z}_d.
\end{eqnarray}
The $\xi_{\nu,a}(i)$ has been defined in Eq.(\ref{41}).
\end{itemize}
\end{proposition}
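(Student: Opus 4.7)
All three identities will reduce to elementary trace manipulations once I combine the $P$-expansion in Eq.~(\ref{ff}) with three ingredients already established in the paper: the rank formula ${\rm Tr}[\Pi(A+i)]=a$ (valid for $a\le d-1$), the resolution of the identity $\frac{1}{da}\sum_i\Pi(A+i)={\bf 1}$ from Eq.~(\ref{2A}), and the two-point trace ${\rm Tr}[\Pi(A+i)\Pi(A+j)]=\xi_{\nu,a}(i-j)$ from Eq.~(\ref{41}).

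For part (1), I will take the trace of $\theta=\sum_i P_{\nu,a}(A+i;\theta)\Pi(A+i)$. The rank formula then gives the second equality ${\rm Tr}(\theta)=a\sum_i P_{\nu,a}(A+i;\theta)$ immediately. For the first equality I will instead write ${\rm Tr}(\theta)={\rm Tr}(\theta\,{\bf 1})$, substitute the resolution of the identity, and identify each resulting term as $\frac{1}{da}Q_{\nu,a}(A+i;\theta)$.

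For part (2), I will multiply the $P$-expansion on the left by $\rho$ and take the trace; linearity together with the definition $Q_{\nu,a}(A+i;\rho)={\rm Tr}[\rho\Pi(A+i)]$ delivers Eq.~(\ref{zzz}) directly. For part (3), I will substitute the $P$-expansion into $Q_{\nu,a}(A;\theta)={\rm Tr}[\theta\Pi(A)]$, exchange trace with the sum, and invoke Eq.~(\ref{41}) with one of the two shifts set to zero, so that each term ${\rm Tr}[\Pi(A+i)\Pi(A)]$ becomes $\xi_{\nu,a}(i)$, recovering Eq.~(\ref{zz}). (I read the $\rho$ on the left-hand side of Eq.~(\ref{zz}) as $\theta$, since the right-hand side involves only $P$-coefficients of $\theta$.)

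No genuine obstacle arises: the content is essentially bookkeeping with the trace. The one logical prerequisite, namely the existence and uniqueness of the $P$-coefficients $P_{\nu,a}(A+i;\theta)$ used throughout, is exactly the solvability claim supporting Eq.~(\ref{ff}); it is an input to the statement rather than a step in its proof.
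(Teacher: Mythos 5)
Your proposal is correct and follows essentially the same route as the paper: trace the $P$-expansion of Eq.~(\ref{ff}) and use ${\rm Tr}[\Pi(A+i)]=a$ together with the resolution of the identity Eq.~(\ref{2A}) for part (1), multiply by $\rho$ and trace for part (2), and multiply by $\Pi(A)$ and invoke Eq.~(\ref{41}) for part (3). Your reading of the $\rho$/$\theta$ mismatch in Eq.~(\ref{zz}) also matches the paper's own resolution, which simply sets $\theta=\rho$ in Eq.~(\ref{ff}) before tracing against $\Pi(A)$.
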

\begin{proof}
\mbox{}
\begin{itemize}
\item[(1)]
The first part is proved using Eq.~(\ref{2A}). 
The second part follows from Eq.~(\ref{ff}) taking into account that ${\rm Tr}[\Pi(A+i)]=|A|$ (for $|A|<d$).
\item[(2)]
We multiply both sides of Eq.~(\ref{ff}) by $\rho$ and we take the trace. This gives Eq.~(\ref{zzz}).
\item[(3)]
We multiply both sides of Eq.~(\ref{ff}) (with $\theta=\rho$) by  $\Pi(A)$ and we take the trace. Using Eq.~(\ref{41}), we prove Eq.~(\ref{zz}).
\end{itemize}
\end{proof}

\subsection{Non-additivity of the $Q$-functions and quantum interference}
Earlier we linked Eq.~(\ref{qqq}) (and also Eq.~(\ref{dfg})) with quantum interference. In terms of the $Q$-functions this is the relation
\begin{eqnarray}
Q_{\nu,a}(A;\theta)\ne \sum _{i\in A}Q_{1,1}(\{i\};\theta).
\end{eqnarray}
Due to quantum interference the information in $Q_{\nu,a}(A;\theta)$ is different from the information in the set $\{Q_{1,1}(\{i\};\theta)|i\in A\}$.

For simplicity we study in more detail the case $|A|=2$. For $\{i,j\}$ in the equivalence class ${\mathfrak N}(\nu;2)$, we define the 
 \begin{eqnarray}\label{456}
\delta Q_{\nu,a}(\{i,j\};\theta)=Q_{\nu,a}(\{i,j\};\theta)-Q_{1,1}(\{i\};\theta)-Q_{1,1}(\{j\};\theta)={\rm Tr}[\theta {\mathfrak d}_{\nu}(\{i,j\})].
\end{eqnarray}
It involves probabilities which are not simultaneously measurable, 
and which need to be measured using different ensembles describing the same density matrix $\rho$.

Using Eq.(\ref{bn}) we prove that
\begin{eqnarray}\label{67}
\sum \delta Q_{\nu,a}(\{i,j\};\theta)=0;\;\;\;\{i,j\}\in {\mathfrak N}(\nu;2).
\end{eqnarray}
The summation involves all $d^2$ sets in the equivalence class ${\mathfrak N}(\nu;2)$.

The $\delta Q_{\nu}(\{i,j\};\rho)$ is intimatelly related to quantum interference and the non-commutativity of $\Pi(\{i\}), \Pi(\{j\})$.
 Within an equivalence class, constructive and destructive interference cancel each other, and the average value of $\delta Q_{\nu}(\{i,j\};\rho)$ is zero.

We note that interference in phase space has been studied from  different points of view in \cite{DSW,A}.

\subsection{Entropies for $Q$-functions}
Let $\rho$ be a density matrix.
Given an equivalence class ${\mathfrak N}(\nu;a)$ of subsets of the phase space, we consider the  $d^2$ non-negative numbers 
\begin{eqnarray}
p_i=\frac{1}{da}Q_{\nu,a}(A+i;\rho);\;\;\;\sum _ip_i=1.
\end{eqnarray}
They form a quasi-probability distribution with entropy 
\begin{eqnarray}\label{ent}
E_{\nu, a}=-\sum _ip_i\log p_i;\;\;\;i\in {\mathbb Z}_d\times {\mathbb Z}_d.
\end{eqnarray}
We use natural logarithms and then the result is in nats.
The entropy quantifies how close to uniform is the distribution $\{p_i\}$.
It takes the maximum value $E_{\nu, a}=\log(d^2)$ when $p_i=\frac{1}{d^2}$.

\section{Lorenz values for the $Q$-functions}

Lorenz values and the Gini index are used extensively in Mathematical Economics for the study of inequality in the distribution of wealth.
We propose similar quantities for the study of inequality in the distribution of the  $Q$-function in the various parts of the phase space.
 They show how spread is the quantum state in the granular structure of the phase space that consists of coherent subspaces in ${\mathfrak H}(\nu, a)$.

Lorenz values and the Gini index are based on  ordering the $Q$ function, and this leads to the concept of the ranking permutation of a density matrix which is a type of postcode that describes its location in the Hilbert space.
We have used this concept previously in refs \cite{C1,C2} in connection with Choquet integrals.
We have also introduced comonotonic density matrices that have the same ranking permutation, again in connection with Choquet integrals.
In the present paper we work with coherent projectors and use these concepts in connection with the Lorenz values and the Gini index for $Q$ functions.

\subsection{The ranking permutation of a density matrix and comonotonicity}

We consider the set ${\mathfrak Q}[\rho, {\mathfrak N}(\nu;a)]$ of the $Q$-functions for a density matrix $\rho$ related to the equivalence class ${\mathfrak N}(\nu;a)$.
We order them from the lowest to the highest as follows:
\begin{eqnarray}\label{order}
Q_{\nu,a}(A_1;\rho)\le Q_{\nu,a}(A_2;\rho)\le ...\le Q_{\nu,a}(A_{d^2};\rho);\;\;\;A_i\in {\mathfrak N}(\nu;a)
\end{eqnarray}
Here we label the subsets in the  equivalence class ${\mathfrak N}(\nu;a)$ according to the value of the corresponding $Q$-function: the index $1$ is given to the subset with the lowest $Q$-function
 (`poorest' subset), the index $2$ to the subset with the second lowest $Q$-function (`second poorest' subset', etc, until the 
 index $d^2$ which is given to the subset with the highest $Q$-function (`richest' subset).
  
Starting from a reference (fixed) ordering $(B_1,...,B_{d^2})$ of the $d^2$ subsets of phase space in the equivalence class ${\mathfrak N}(\nu;a)$, we get any other ordering with permutations:
\begin{eqnarray}
(B_1,...,B_{d^2})\;\overset{\pi} \rightarrow\;(\pi(B_1),...,\pi(B_{d^2})).
\end{eqnarray}
$\pi$ is an element of the symmetric group ${\cal S}_{d^2}$ \cite{sagan} which has $d^2!$ elements.
Multiplication in this group is the composition.
\begin{definition}
The ranking permutation of a density matrix $\rho$ with respect to the covering of the Hilbert space defined by ${\mathfrak H}(\nu;a)$,  is the permutation $\pi$ such that 
\begin{eqnarray}\label{order}
Q_{\nu,a}(\pi(B_1);\rho)\le Q_{\nu,a}(\pi(B_2);\rho)\le ...\le Q_{\nu,a}(\pi(B_{d^2});\rho);\;\;\;\pi(B_i)\in {\mathfrak N}(\nu;a).
\end{eqnarray}
\end{definition}

The ranking permutation indicates the coherent subspaces where the quantum state `mostly lives' and in this sense it is is a kind of postcode of the density matrix within the Hilbert space. 
A quantum particle `lives' everywhere in the Hilbert space, but in some parts more than in others. We quantify this with
the ranking permutation, which is based on ordering the probabilities ${\rm Tr}[\rho \Pi(A_i)]$. 
Comonotonic density matrices `live' in the same part of the Hilbert space and in this sense they are physically similar.

Let ${R}$ be the set of all density matrices, and ${\mathfrak R}$ the subset of ${R}$ that contains the density matrices for which the inequalities in Eq.(\ref{order}) are strict inequalities (i.e., their values of the $Q$ function are different from each other). 
The ranking permutation is unique for density matrices in  ${\mathfrak R}$, and multi-valued for density matrices in  $R\setminus {\mathfrak R}$ (which can be called `borderline' density matrices).

\begin{definition}
Two density matrices in ${\mathfrak R}$ are called comonotonic if they have  the same ranking permutation.
We denote this as $\rho  \overset{\mathfrak R}\sim \sigma$.
\end{definition}
\begin{lemma}
Comonotonicity  is an equivalence relation in ${\mathfrak R}$ (but not in ${ R}$).
\end{lemma}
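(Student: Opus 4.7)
The proof is a standard \emph{pullback-of-equality} argument, and the main content is distinguishing why the argument succeeds on $\mathfrak{R}$ but breaks on $R$.

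My plan is to observe that on $\mathfrak{R}$ the assignment $\rho\mapsto \pi_\rho$ is a well-defined function from $\mathfrak{R}$ into the symmetric group $\mathcal{S}_{d^2}$: by definition of $\mathfrak{R}$, the inequalities in Eq.~(\ref{order}) are strict, so exactly one permutation $\pi_\rho$ of the reference ordering $(B_1,\dots,B_{d^2})$ produces an increasing rearrangement of the $d^2$ numbers $\{Q_{\nu,a}(A;\rho)\}$. The relation $\overset{\mathfrak{R}}{\sim}$ is then literally $\rho\overset{\mathfrak{R}}{\sim}\sigma \iff \pi_\rho=\pi_\sigma$, i.e.\ the pullback of the equality relation on $\mathcal{S}_{d^2}$ along the map $\rho\mapsto\pi_\rho$. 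Since equality is an equivalence relation and pullbacks of equivalence relations by functions are equivalence relations, reflexivity, symmetry and transitivity on $\mathfrak{R}$ follow immediately: $\pi_\rho=\pi_\rho$; $\pi_\rho=\pi_\sigma\Rightarrow\pi_\sigma=\pi_\rho$; and $\pi_\rho=\pi_\sigma$ together with $\pi_\sigma=\pi_\tau$ yields $\pi_\rho=\pi_\tau$.

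For the second part — that $\overset{\mathfrak{R}}{\sim}$ does \emph{not} extend to an equivalence relation on all of $R$ — the obstruction lies in transitivity, since on $R\setminus\mathfrak{R}$ the ranking permutation is multi-valued (ties between $Q$-values admit several permutations producing a non-decreasing ordering). The natural extension ``$\rho\sim\sigma$ iff they share at least one ranking permutation'' is still reflexive and symmetric but fails transitivity: I will exhibit a borderline $\sigma\in R\setminus\mathfrak{R}$ that has at least two admissible ranking permutations $\pi$ and $\pi'$ (with $\pi\ne\pi'$), and pick $\rho,\tau\in\mathfrak{R}$ with $\pi_\rho=\pi$ and $\pi_\tau=\pi'$. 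Then $\rho\sim\sigma$ and $\sigma\sim\tau$ but $\rho\not\sim\tau$. Such $\sigma$ exists whenever two of the $d^2$ coherent projectors in ${\mathfrak M}(\nu;a)$ have the same trace against some density matrix; e.g.\ take $\sigma=\frac{1}{d}\mathbf{1}$ (the maximally mixed state), for which all $Q_{\nu,a}(A;\sigma)=a/d$ are equal, so every permutation is admissible, and perturb it in two different directions to obtain $\rho,\tau\in\mathfrak{R}$ realising prescribed but distinct $\pi_\rho,\pi_\tau$.

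The only step requiring any care is the existence argument for $\rho,\tau\in\mathfrak{R}$ with prescribed ranking permutations — this follows by a small-perturbation argument: start from $\sigma=\frac{1}{d}\mathbf{1}$, and add a small traceless perturbation $\epsilon\,\theta$ chosen so that $\{Q_{\nu,a}(B_i;\theta)\}_i$ has all entries distinct and ordered according to the desired permutation, which is possible since $\theta$ can be any Hermitian operator via the $P$-representation of proposition~\ref{pro135}. For sufficiently small $\epsilon>0$ the perturbed operator is a positive density matrix in $\mathfrak{R}$ whose ranking permutation is the prescribed one. I expect this existence/perturbation step to be the main (and only) non-trivial point; the equivalence-relation verification on $\mathfrak{R}$ itself is a formal consequence of single-valuedness of $\pi_\rho$.
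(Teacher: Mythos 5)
Your proof is correct, and on the positive half (the equivalence relation on ${\mathfrak R}$) it is essentially the paper's argument: the paper simply verifies reflexivity, symmetry and transitivity directly, which is the same content as your pullback-of-equality observation, since on ${\mathfrak R}$ the strict inequalities make $\rho\mapsto\pi_\rho$ single-valued and the relation is literally $\pi_\rho=\pi_\sigma$.

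Where you genuinely go beyond the paper is the negative half. The paper offers no proof at all that the relation fails on $R$ --- it only remarks ``transitivity is valid in ${\mathfrak R}$ (but not in $R$)'' --- whereas you construct an explicit counterexample: take $\sigma=\frac{1}{d}{\bf 1}$, for which every permutation is a ranking permutation (this is consistent with example \ref{ex11} in the paper), and sandwich it between two states $\rho,\tau\in{\mathfrak R}$ with distinct prescribed ranking permutations, so that $\rho\sim\sigma$, $\sigma\sim\tau$ but $\rho\not\sim\tau$. Your perturbation step is sound and is the only part needing justification: since the $d^2$ projectors $\Pi(A_i)$ in an equivalence class admit a $P$-representation for every Hermitian operator (Eq.~(\ref{ff}), proposition \ref{pro135}), the linear map $\theta\mapsto\bigl(Q_{\nu,a}(A_i;\theta)\bigr)_i$ is onto ${\mathbb R}^{d^2}$, so one can realise any vector of distinct $Q$-values summing to zero (tracelessness follows from Eq.~(\ref{62})), and for small $\epsilon$ the state $\frac{1}{d}{\bf 1}+\epsilon\theta$ remains positive and lies in ${\mathfrak R}$ with the prescribed ranking permutation. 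The trade-off: the paper's proof is shorter but leaves the parenthetical claim about $R$ unproved; yours makes the lemma complete as stated.
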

\begin{proof}
It is easily seen that 
\begin{eqnarray}
&&\rho\overset {\mathfrak R}\sim \rho\nonumber\\
&&\rho\overset {\mathfrak R}\sim \sigma\;\rightarrow\;\sigma \overset {\mathfrak R}\sim \rho\nonumber\\
&&\rho\overset {\mathfrak R}\sim \sigma \;{\rm and}\;\sigma\overset {\mathfrak R}\sim \tau\;\rightarrow\;\rho\overset {\mathfrak R}\sim \tau.
\end{eqnarray}
We note that transitivity is valid in ${\mathfrak R}$ (but not in ${R}$).
\end{proof}
The equivalence relation $\overset{\mathfrak R}\sim$ partitions  ${\mathfrak R}$ into $d^2!$ equivalence classes of comonotonic density matrices, which we label with the  permutation of the ranking permuation of this 
equivalence class,  as ${\mathfrak R}(\nu,a,\pi)$.

\begin{proposition}\label{pro34}
If $\rho, \sigma$ are comonotonic density matrices (i.e., they belong to the same equivalence class ${\mathfrak R}(\nu,a,\pi)$)
then the density matrix $p \rho+(1-p)\sigma$ where $0\le p \le 1$ is comonotonic to both of them (i.e., it also belongs to the equivalence class ${\mathfrak R}(\nu,a,\pi)$).
Therefore ${\mathfrak R}(\nu,a,\pi)$ is a convex set of density matrices. 
\end{proposition}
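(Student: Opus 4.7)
The plan is to exploit the linearity of the $Q$-function in its density-matrix argument, which immediately interacts nicely with convex combinations and with the strict inequalities that define membership in ${\mathfrak R}(\nu,a,\pi)$.

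First I would set $\tau = p\rho + (1-p)\sigma$ and observe that $\tau$ is a density matrix, since the set of density matrices is convex. Then, using that $Q_{\nu,a}(A;\cdot) = \mathrm{Tr}[\,\cdot\,\Pi(A)]$ is linear in its first argument, I would record the identity
\begin{equation}
Q_{\nu,a}(A;\tau) = p\,Q_{\nu,a}(A;\rho) + (1-p)\,Q_{\nu,a}(A;\sigma)
\end{equation}
for every $A\in {\mathfrak N}(\nu;a)$.

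Next, since $\rho,\sigma \in {\mathfrak R}(\nu,a,\pi)$, by the definition of ${\mathfrak R}$ (the density matrices with all $Q$-values distinct) and by the definition of their common ranking permutation $\pi$, we have the strict chains
\begin{equation}
Q_{\nu,a}(\pi(B_1);\rho) < \ldots < Q_{\nu,a}(\pi(B_{d^2});\rho),\qquad Q_{\nu,a}(\pi(B_1);\sigma) < \ldots < Q_{\nu,a}(\pi(B_{d^2});\sigma).
\end{equation}
For $0<p<1$, adding $p$ times the first chain and $(1-p)$ times the second preserves strict inequality term by term, so the combined values satisfy
\begin{equation}
Q_{\nu,a}(\pi(B_1);\tau) < \ldots < Q_{\nu,a}(\pi(B_{d^2});\tau),
\end{equation}
which shows both that $\tau\in {\mathfrak R}$ (all $Q$-values distinct) and that its ranking permutation is exactly $\pi$. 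The boundary cases $p=0$ and $p=1$ give $\tau=\sigma$ and $\tau=\rho$ respectively, which lie in ${\mathfrak R}(\nu,a,\pi)$ by hypothesis. Hence $\tau \in {\mathfrak R}(\nu,a,\pi)$ for every $p\in[0,1]$, establishing convexity.

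There is no genuine obstacle in this argument; the only point that deserves a brief mention is why strict inequalities are preserved under convex combinations, namely that if $x_1<y_1$ and $x_2<y_2$ then $px_1+(1-p)x_2 < py_1+(1-p)y_2$ for every $p\in(0,1)$, which is what keeps $\tau$ inside the restricted set ${\mathfrak R}$ rather than slipping onto its boundary.
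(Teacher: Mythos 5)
Your proof is correct and follows essentially the same route as the paper's own argument: linearity of $Q_{\nu,a}(A;\cdot)$ in the density matrix, combined with the fact that convex combinations preserve the common strict chain of inequalities, hence the ranking permutation $\pi$. The only difference is your explicit handling of the boundary cases $p=0,1$ and the remark on membership in ${\mathfrak R}$, which the paper leaves implicit; this is a minor refinement, not a different approach.
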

\begin{proof}
We first point out that
\begin{eqnarray}
Q[p \rho+(1-p)\sigma]=pQ(\rho)+(1-p)Q(\sigma).
\end{eqnarray}
Since $\rho, \sigma$ belong to the same equivalence class, we have 
\begin{eqnarray}
&&Q_{\nu,a}(A_1;\rho)< Q_{\nu,a}(A_2;\rho)< ...< Q_{\nu,a}(A_{d^2};\rho);\;\;\;A_i\in {\mathfrak N}(\nu;a)\nonumber\\
&&Q_{\nu,a}(A_1;\sigma)< Q_{\nu,a}(A_2;\sigma)< ...< Q_{\nu,a}(A_{d^2};\sigma).
\end{eqnarray}
From this follows that
\begin{eqnarray}
pQ_{\nu,a}(A_1;\rho)+(1-p)Q_{\nu,a}(A_1;\sigma)<p Q_{\nu,a}(A_2;\rho)+(1-p) Q_{\nu,a}(A_2;\sigma)< ...
\end{eqnarray}
which is rewriten as
\begin{eqnarray}
&&Q_{\nu,a}[A_1;p\rho+(1-p)\sigma]< Q_{\nu,a}[A_2;p\rho+(1-p)\sigma]< ...< Q_{\nu,a}[A_{d^2};p\rho+(1-p)\sigma].
\end{eqnarray}
This proves that the ranking permutation of $p \rho+(1-p)\sigma$ is the same as  the ranking permutation of $ \rho, \sigma$.
From this follows that  ${\mathfrak R}(\nu,a,\pi)$ is a convex set of density matrices. 

\end{proof}

Every  unitary transformation $U\rho U^{\dagger}$ of the density matrix $\rho$ changes its ranking permuation.
For example if $\pi$ is the ranking permuation of $\rho$, the displaced density matrix $\rho ^{\prime}=D(i)\rho[ D(i)]^{\dagger}$
has ranking permutation the composition $\varpi\circ \pi$ where $\varpi(A)=A+i$ with $i\in {\mathbb Z}_d\times {\mathbb Z}_d$.
Indeed,
 \begin{eqnarray}\label{Q12}
Q_{\nu,a}(A^{\prime};\rho^{\prime})={\rm Tr}[D(i)\rho [D(i)]^\dagger\Pi(A^{\prime})]={\rm Tr}[\rho\Pi(A^{\prime}-i)]=Q_{\nu,a}(A;\rho);\;\;\;A^{\prime}-i=A.
\end{eqnarray}

\subsection{Lorenz values for $Q$ functions}

Given a density matrix $\rho$, we order the values of its $Q$ function as in Eq.(\ref{order}) and then define its Lorenz values ${\cal L}_{\nu,a}(\ell;\rho)$ as follows:
\begin{eqnarray}\label{71}
{\cal L}_{\nu,a}(\ell;\rho)=\frac{1}{da}[Q_{\nu,a}(A_1;\rho)+Q_{\nu,a}(A_2;\rho)+...+Q_{\nu,a}(A_{\ell};\rho)];\;\;\;\ell=1,...,d^2.
\end{eqnarray}
Measurements with the less important for $\rho$ projectors $\Pi(A_1),...,\Pi(A_\ell)$ (using different ensembles of the same quantum state because they do not commute) will give `yes'
with probabilities the $\ell$ smallest $Q$ functions $Q_{\nu,a}(A_1;\rho),..., Q_{\nu,a}(A_\ell;\rho)$.
The ${\cal L}_{\nu,a}(\ell;\rho)$ is an increasing function of $\ell$, and it is the sum of these $\ell$ smallest probabilities (times a normalization factor).

Special cases are
\begin{eqnarray}
{\cal L}_{\nu,a}(1;\rho)=\frac{1}{da}Q_{\nu,a}(A_1;\rho);\;\;\;{\cal L}_{\nu,a}(d^2;\rho)=1.
\end{eqnarray}

\begin{proposition}\label{pro250}
\mbox{}
\begin{itemize}
\item[(1)]
\begin{eqnarray}\label{dd}
{\cal L}_{\nu,a}(\ell;\rho)\le \frac{\ell}{d^2}.
\end{eqnarray}
\item[(2)]
If the $Q$ function ${\mathfrak Q}[\rho, {\mathfrak N}(\nu; a)]$ is finer than ${\mathfrak Q}[\rho, {\mathfrak N}(\mu; b)]$ 
( ${\mathfrak Q}[\rho, {\mathfrak N}(\nu; a)]\overset {Q}\sqsubset {\mathfrak Q}[\rho, {\mathfrak N}(\mu; b)]$) then
\begin{eqnarray}\label{82}
a{\cal L}_{\nu,a}(\ell;\rho)\le b{\cal L}_{\mu,b}(\ell;\rho).
\end{eqnarray}
\end{itemize}
\end{proposition}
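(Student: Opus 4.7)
My plan for part (1) is to combine the resolution of identity in Eq.~(\ref{2A}) with the fact that the $Q$-values are sorted. Taking the trace of $\frac{1}{da}\sum_i \Pi(A+i) = \mathbf{1}$ against $\rho$ gives $\sum_i Q_{\nu,a}(A_i;\rho) = da$, so the average value of the $Q$-function over the equivalence class is $a/d$. Since the labeling orders $Q_{\nu,a}(A_1;\rho) \le \dots \le Q_{\nu,a}(A_{d^2};\rho)$, the mean of the $\ell$ smallest values is at most the overall mean. Hence
\begin{equation}
{\cal L}_{\nu,a}(\ell;\rho) = \frac{1}{da}\sum_{k=1}^\ell Q_{\nu,a}(A_k;\rho) \le \frac{1}{da}\cdot \ell \cdot \frac{da}{d^2} = \frac{\ell}{d^2}.
\end{equation}

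For part (2), I first rewrite the desired inequality. Since $a{\cal L}_{\nu,a}(\ell;\rho) = \frac{1}{d}\sum_{k=1}^\ell Q_{\nu,a}(A_k;\rho)$ and $b{\cal L}_{\mu,b}(\ell;\rho) = \frac{1}{d}\sum_{k=1}^\ell Q_{\mu,b}(B_k;\rho)$ (with both families sorted in increasing order), the claim reduces to the inequality between the sums of the $\ell$ smallest values in each sorted family. The refinement hypothesis ${\mathfrak Q}[\rho, {\mathfrak N}(\nu;a)]\overset{Q}\sqsubset {\mathfrak Q}[\rho, {\mathfrak N}(\mu;b)]$ furnishes, via Definition~\ref{def11} and Proposition~\ref{pro3}, a bijective indexing of the phase-space subsets such that $A_i \subseteq B_i$, hence $\Pi(A_i) \overset{P}\prec \Pi(B_i)$ and therefore $Q_{\nu,a}(A_i;\rho) \le Q_{\mu,b}(B_i;\rho)$ for every $i$ under this common indexing.

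The key step is then a standard order-statistic lemma: if $x_i \le y_i$ for all $i \in \{1,\dots,d^2\}$, the sum of the $\ell$ smallest $x_i$ is at most the sum of the $\ell$ smallest $y_i$. I would prove this in one line using the variational characterisation $\sum_{k=1}^\ell y_{(k)} = \min_{|S|=\ell}\sum_{i\in S}y_i$: letting $S^\ast$ be the minimiser on the $y$-side,
\begin{equation}
\sum_{k=1}^\ell x_{(k)} = \min_{|S|=\ell}\sum_{i\in S}x_i \le \sum_{i\in S^\ast}x_i \le \sum_{i\in S^\ast}y_i = \sum_{k=1}^\ell y_{(k)}.
\end{equation}
Applying this with $x_i = Q_{\nu,a}(A_i;\rho)$ and $y_i = Q_{\mu,b}(B_i;\rho)$, multiplying by $1/d$, gives Eq.~(\ref{82}).

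There is no serious obstacle in either part; the only subtlety worth flagging is that in part (2) the ordering used in the definition of ${\cal L}_{\nu,a}$ and ${\cal L}_{\mu,b}$ need not match the natural bijective indexing provided by the refinement, which is exactly why the order-statistic lemma (rather than a pointwise comparison under a single labeling) is needed to bridge the two sides.
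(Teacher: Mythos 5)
Your proposal is correct and follows essentially the same route as the paper: part (1) rests on the normalization $\sum_i Q_{\nu,a}(A_i;\rho)=da$ (from the resolution of the identity) together with the sorted ordering, where the paper reaches Eq.~(\ref{dd}) via a two-case analysis on whether $Q_{\nu,a}(A_\ell;\rho)\le a/d$, which is just a hands-on proof of your ``mean of the $\ell$ smallest values is at most the overall mean'' step; and part (2) is exactly the paper's argument---sum the pointwise inequality $Q_{\nu,a}(A_i;\rho)\le Q_{\mu,b}(B_i;\rho)$ over the indices of the $\ell$ smallest $B$-side values and note that this $A$-side sum dominates the sum of the $\ell$ smallest $A$-side values. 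No gaps; the subtlety you flag (the sorting need not match the bijection) is precisely the parenthetical remark in the paper's own proof.
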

\begin{proof}
\mbox{}
\begin{itemize}
\item[(1)]
We start from Eq.(\ref{62}) which we rewrite as
\begin{eqnarray}
\frac{1}{da}[Q_{\nu,a}(A_1;\rho)+Q_{\nu,a}(A_2;\rho)+...+Q_{\nu,a}(A_{d^2};\rho)]=1.
\end{eqnarray}
For $k>\ell$ we have $Q_{\nu,a}(A_k;\rho)\ge Q_{\nu,a}(A_\ell;\rho)$, and we replace the $Q_{\nu,a}(A_k;\rho)$ with the $Q_{\nu,a}(A_\ell;\rho)$. We get
\begin{eqnarray}\label{86}
{\cal L}_{\nu,a}(\ell;\rho)+\frac{d^2-\ell}{da}Q_{\nu,a}(A_\ell;\rho)\le 1.
\end{eqnarray}
At this stage we consider two cases:
\begin{itemize}
\item[(i)]
If  $Q_{\nu,a}(A_\ell;\rho)\le \frac{a}{d}$ we prove Eq.~(\ref{dd}) as follows.
Taking into account that $a<d$ we rewrite Eq.~(\ref{86}) as
\begin{eqnarray}
{\cal L}_{\nu,a}(\ell;\rho)\le \frac{1}{da}[\ell Q_{\nu,a}(A_\ell;\rho)]\le \frac{1}{da}\left (\ell \frac{a}{d}\right )\le  \frac{\ell}{d^2}.
\end{eqnarray}
This proves Eq.~(\ref{dd}) for this case.
\item[(ii)]
If  $Q_{\nu,a}(A_\ell;\rho)> \frac{a}{d}$ we get
\begin{eqnarray}
{\cal L}_{\nu,a}(\ell;\rho)+\frac{d^2-\ell}{da}Q_{\nu,a}(A_\ell;\rho)>{\cal L}_{\nu,a}(\ell;\rho)+\frac{d^2-\ell}{da}\frac{a}{d}={\cal L}_{\nu,a}(\ell;\rho)+1-\frac{\ell}{d^2}
\end{eqnarray}
From this and Eq.~(\ref{86}) follows Eq.~(\ref{dd}).
\end{itemize}
\item[(2)]
If the $Q$ function ${\mathfrak Q}[\rho, {\mathfrak N}(\nu; a)]$ is finer than ${\mathfrak Q}[\rho, {\mathfrak N}(\mu; b)]$,
 there is a bijective map between $Q_{\nu,a}(A_i;\rho)\in {\mathfrak Q}[\rho, {\mathfrak N}(\nu;a)]$ and $Q_{\mu,b}(B_i;\rho)\in {\mathfrak Q}[\rho, {\mathfrak N}(\mu;b)]$  such that $Q_{\nu,a}(A_i;\theta)\le Q_{\mu,b}(B_i;\rho)$.
For the $\ell$ lowest values of $ Q_{\mu,b}(B_i;\rho)$ and we get
\begin{eqnarray}
Q_{\nu,a}(A_1;\rho)+Q_{\nu,a}(A_2;\rho)+...+Q_{\nu,a}(A_{\ell};\rho)\le Q_{\mu,b}(B_1;\rho)+Q_{\mu,b}(B_2;\rho)+...+Q_{\mu,b}(B_{\ell};\rho)
\end{eqnarray}
The right hand side is $b{\cal L}_{\mu,b}(\ell;\rho)$.
The left hand side is greater or equal to $a{\cal L}_{\nu,a}(\ell;\rho)$ (because they might not be the $\ell$ lowest values of $ Q_{\nu, a}(A_i;\rho)$). 
This proves the proposition.
\end{itemize}
\end{proof}

Eq.(\ref{82}) shows that the Lorenz values of a coarser $Q$ function are larger than the corresponding  Lorenz values of a finer $Q$ function, times a factor.

\begin{proposition}\label{pro66}
The Lorenz values are a superadditive function:
 \begin{eqnarray}\label{hhh}
{\cal L}_{\nu,a}[\ell;p\rho+(1-p)\sigma]\ge p{\cal L}_{\nu,a}(\ell;\rho)+(1-p){\cal L}_{\nu,a}(\ell;\sigma);\;\;\;0\le p\le 1.
\end{eqnarray}
If $\rho, \sigma$ are comonotonic density matrices then the following additivity property holds:
\begin{eqnarray}\label{QA1}
{\cal L}_{\nu,a}[\ell;p\rho+(1-p)\sigma]=p{\cal L}_{\nu,a}(\ell;\rho)+(1-p){\cal L}_{\nu,a}(\ell;\sigma);\;\;\;0\le p\le 1.
\end{eqnarray}
\end{proposition}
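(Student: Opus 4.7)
The plan is to rewrite the Lorenz value as a minimum of sums over size-$\ell$ subsets of the equivalence class, and then combine this variational characterisation with the linearity of the $Q$ function in the state.

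First I would record the standard combinatorial fact that for any finite list of real numbers, the sum of the $\ell$ smallest entries equals the minimum of $\sum_{A\in S}x_A$ over all subsets $S$ of cardinality $\ell$. Applied to $x_A=Q_{\nu,a}(A;\rho)$, this gives
\begin{equation}
\mathcal{L}_{\nu,a}(\ell;\rho)=\frac{1}{da}\min_{\substack{S\subseteq {\mathfrak N}(\nu;a)\\ |S|=\ell}}\sum_{A\in S}Q_{\nu,a}(A;\rho).
\end{equation}
Second, I would use that $Q_{\nu,a}(A;\cdot)={\rm Tr}[\,\cdot\,\Pi(A)]$ is linear in the operator argument, so for $\tau=p\rho+(1-p)\sigma$ one has $Q_{\nu,a}(A;\tau)=pQ_{\nu,a}(A;\rho)+(1-p)Q_{\nu,a}(A;\sigma)$ for every $A$.

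For the superadditivity (\ref{hhh}), let $S^\ast$ be a size-$\ell$ subset that attains the minimum for $\tau$. Substituting the linearity identity and splitting the sum,
\begin{equation}
da\,\mathcal{L}_{\nu,a}(\ell;\tau)=p\sum_{A\in S^\ast}Q_{\nu,a}(A;\rho)+(1-p)\sum_{A\in S^\ast}Q_{\nu,a}(A;\sigma)\ge p\,da\,\mathcal{L}_{\nu,a}(\ell;\rho)+(1-p)\,da\,\mathcal{L}_{\nu,a}(\ell;\sigma),
\end{equation}
where each inequality is just the definition of the minimum (since $S^\ast$ is only optimal for $\tau$, not necessarily for $\rho$ or $\sigma$). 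Dividing by $da$ gives (\ref{hhh}).

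For the comonotonic case (\ref{QA1}), I would invoke Proposition \ref{pro34}: if $\rho$ and $\sigma$ share the ranking permutation $\pi$, then so does $\tau=p\rho+(1-p)\sigma$. Hence the size-$\ell$ subset realising the minimum is the \emph{same} for all three states, namely $\{\pi(B_1),\ldots,\pi(B_\ell)\}$. In the displayed inequality above, the two minima on the right are then attained at $S^\ast$ as well, so both inequalities become equalities and we recover (\ref{QA1}). The only real obstacle is the need that the joint minimiser exists, and this is precisely what Proposition \ref{pro34} supplies on the strict-inequality subset ${\mathfrak R}$; the general comonotonic case follows by the convexity statement there or by a limiting argument.
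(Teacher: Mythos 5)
Your proposal is correct and is essentially the paper's own argument in variational clothing: your minimiser $S^\ast$ for the mixture is exactly the set $\{A_1,\dots,A_\ell\}$ of the $\ell$ lowest $Q$-values of $p\rho+(1-p)\sigma$ used in the paper, and both proofs then combine linearity of $Q_{\nu,a}(A;\cdot)$ with the observation that this set is suboptimal for $\rho$ and $\sigma$ individually, and invoke the shared ranking permutation (Proposition \ref{pro34}) for the comonotonic equality.
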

\begin{proof}
We have
\begin{eqnarray}\label{V1}
&&{\cal L}_{\nu,a}[\ell;p\rho+(1-p)\sigma]=\frac{1}{da}\left \{Q_{\nu,a}[A_1;p\rho+(1-p)\sigma]+...+Q_{\nu,a}[A_{\ell};p\rho+(1-p)\sigma]\right \}
\nonumber\\&&=\frac{p}{da}
[Q_{\nu,a}(A_1;\rho)+...+Q_{\nu,a}(A_{\ell};\rho)]+\frac{1-p}{da}[Q_{\nu,a}(A_1;\sigma)+...+Q_{\nu,a}(A_{\ell};\sigma)]
\end{eqnarray}
$A_1,...,A_{d^2}$ is the ranking permutation for $p\rho+(1-p)\sigma$, and in general it will not be the 
ranking permutation for $\rho$ and $\sigma$. Therefore
\begin{eqnarray}\label{V2}
&&\frac{p}{da}[Q_{\nu,a}(A_1;\rho)+...+Q_{\nu,a}(A_{\ell};\rho)]\ge p{\cal L}_{\nu,a}(\ell;\rho)\nonumber\\
&&\frac{1-p}{da}[Q_{\nu,a}(A_1;\sigma)+...+Q_{\nu,a}(A_{\ell};\sigma)]\ge (1-p){\cal L}_{\nu,a}(\ell;\sigma)
\end{eqnarray}
Combining Eqs~(\ref{V1}), (\ref{V2}) we prove Eq.~(\ref{hhh}).

For comonotonic density matrices the ranking permutation of $p\rho+(1-p)\sigma$ is the same as the ranking permutation of $\rho, \sigma$.
In this case in Eq.(\ref{V2}) we have equalities and this proves Eq.~(\ref{QA1}).
\end{proof}
\begin{example}\label{ex11}
We consider the case $\rho =\frac{1}{d}{\bf 1}$. 
Any permutation is ranking permutation of this density matrix.
Then
 \begin{eqnarray}
Q_{\nu,a}\left (A;\frac{1}{d}{\bf 1}\right )=\frac{a}{d};\;\;\;P_{\nu,a}\left (A;\frac{1}{d}{\bf 1}\right )=\frac{1}{d^2a};\;\;\;E_{\nu, a}=\log (d^2);\;\;\;{\cal L}_{\nu,a}\left (\ell;\frac{1}{d}{\bf 1}\right )=\frac{\ell}{d^2}
\end{eqnarray}
where $A\in{\mathfrak N}(\nu,a)$.
This state is uniformly distributed  in the granular structure of the phase space that consists of coherent subspaces in ${\mathfrak H}(\nu, a)$ (all the $Q_{\nu,a}\left (A_i;\frac{1}{d}{\bf 1}\right )$ are equal to each other).

In the case $a=2$ we get
\begin{eqnarray}
Q_{\nu,2}\left (\{i,j\};\frac{1}{d}{\bf 1}\right )=\frac{2}{d};\;\;\;\delta Q_{\nu,2}\left (\{i,j\};\frac{1}{d}{\bf 1}\right )=0.
\end{eqnarray}
The $\rho =\frac{1}{d}{\bf 1}$ can be viewed as a `semi-classical' state in the sense that  $\delta Q_{\nu,2}\left (\{i,j\};\frac{1}{d}{\bf 1}\right )=0$.
\end{example}

\subsection{Lorenz operators: truncated resolutions of the identity}

We introduce the Lorenz operators associated with the permutation $\pi \in {\cal S}_{d^2}$, as
\begin{eqnarray}\label{100}
\Lambda_{\nu,a,\pi}(\ell)=\frac{1}{da}\left \{\Pi[\pi(B_1)]+\Pi[\pi(B_2)]+...+\Pi[\pi(B_{\ell})]\right \};\;\;\;\ell=1,...,d^2.
\end{eqnarray}
Here $(B_1,...,B_{d^2})$ is the reference ordering of the subsets in ${\mathfrak N}(\nu;a)$.
The Lorenz operators  are positive semi-definite operators (as sum of projectors) and 
\begin{eqnarray}
{\rm Tr}[\Lambda_{\nu,a,\pi}(\ell)]=\frac{\ell}{d};\;\;\;\Lambda_{\nu,a,\pi}(d^2)={\bf 1}.
\end{eqnarray}
From Eqs.~(\ref{2A}),(\ref{100}) we see that the Lorenz operators are truncated resolutions of the identity, and as the permutation $\pi$ changes we get various orderings.

If the ranking permutation of the density matrix $\rho$ is $\pi$, then 
\begin{eqnarray}\label{83}
{\cal L}_{\nu,a}(\ell;\rho)={\rm Tr}[\Lambda_{\nu,a,\pi}(\ell)\rho].
\end{eqnarray}
We note that for the borderline   density matrices in $R\setminus {\mathfrak R}$, we can use any of the ranking permutations 
of $\rho$ (they will give the same result).

\section{The Gini index: Inequality in the distribution of the $Q$ function}

Example \ref{ex11} above shows that for $\rho =\frac{1}{d}{\bf 1}$ we get a uniform distribution of the $Q$-function related to the coherent subspaces within an equivalence class, and this gives  ${\cal L}_{\nu,a}\left (\ell;\frac{1}{d}{\bf 1}\right )=\frac{\ell}{d^2}$.
The quantity
\begin{eqnarray}\label{84}
{\cal G}_{\nu,a}(\rho)=\frac{1}{\cal N}\sum _{\ell=1}^{d^2}\left[{\cal L}_{\nu,a}\left (\ell;\frac{1}{d}{\bf 1}\right )-{\cal L}_{\nu,a}(\ell;\rho)\right ]=\frac{1}{\cal N}\sum _{\ell=1}^{d^2}\left[\frac{\ell}{d^2}-{\cal L}_{\nu,a}(\ell;\rho)\right ];\;\;\;
{\cal N}=\sum _{\ell=1}^{d^2}\frac{\ell}{d^2}=\frac{d^2+1}{2}
\end{eqnarray}
is analogous to the Gini index in Mathematical Economics, and in the present context measures how close is the $Q$ function to a uniform distribution in phase space.
It is easily seen that
\begin{eqnarray}\label{85}
{\cal G}_{\nu,a}(\rho)&=&1-\frac{2}{d^2+1}\sum _{\ell=1}^{d^2}{\cal L}_{\nu,a}(\ell;\rho)=\frac{d^2-1}{d^2+1}-\frac{2}{d^2+1}\sum _{\ell=1}^{d^2-1}{\cal L}_{\nu,a}(\ell;\rho)\nonumber\\&=&
1-\frac{2}{(d^2+1)da}[d^2Q_{\nu,a}(A_1;\rho)+(d^2-1)Q_{\nu,a}(A_2;\rho)+...+Q_{\nu,a}(A_{d^2};\rho)]
\end{eqnarray}
We see that bounds for the Gini index are 
\begin{eqnarray}
0\le {\cal G}_{\nu,a}(\rho)< \frac{d^2-1}{d^2+1}.
\end{eqnarray}
By definition the Gini index  for $\rho=\frac{1}{d}{\bf 1}$ is ${\cal G}_{\nu,a}\left (\frac{1}{d}{\bf 1}\right )=0$.

The Gini index shows the inequality in the distribution of the $Q$ function of a quantum state, in the granular structure of the Hilbert space that consists of coherent subspaces in ${\mathfrak H}(\nu, a)$.
It is close to $0$ for states which are almost uniformly distributed in all of these coherent subspaces, and it is close to $\frac{d^2-1}{d^2+1}$ for states which `live' primarily in a few of these coherent subspaces.
\begin{remark}
It is an open problem to find the maximum value (smaller than $\frac{d^2-1}{d^2+1}$) that ${\cal G}_{\nu,a}(\rho)$ can take in the set of all density matrices.
\end{remark}

\begin{proposition}\label{pro251}
If the $Q$ function ${\mathfrak Q}[\rho, {\mathfrak N}(\nu; a)]$ is finer than ${\mathfrak Q}[\rho, {\mathfrak N}(\mu; b)]$ 
( ${\mathfrak Q}[\rho, {\mathfrak N}(\nu; a)]\overset {Q}\sqsubset {\mathfrak Q}[\rho, {\mathfrak N}(\mu; b)]$) then
\begin{eqnarray}\label{97}
a[1-{\cal G}_{\nu,a}(\rho)]\le b[1-{\cal G}_{\nu,b}(\rho)].
\end{eqnarray}
\end{proposition}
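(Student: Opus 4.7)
The plan is to derive this directly from the superadditive comparison of Lorenz values established in Proposition~\ref{pro250}(2), by summing it over $\ell$ and repackaging the sum via the definition of the Gini index in Eq.~(\ref{85}).

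First, I would rewrite the Gini index in the convenient form
\begin{equation}
1-{\cal G}_{\nu,a}(\rho) \;=\; \frac{2}{d^2+1}\sum_{\ell=1}^{d^2}{\cal L}_{\nu,a}(\ell;\rho),
\end{equation}
which is immediate from the first equality in Eq.~(\ref{85}). Thus proving \eqref{97} is equivalent to establishing
\begin{equation}
a\sum_{\ell=1}^{d^2}{\cal L}_{\nu,a}(\ell;\rho) \;\le\; b\sum_{\ell=1}^{d^2}{\cal L}_{\mu,b}(\ell;\rho),
\end{equation}
since the common factor $\tfrac{2}{d^2+1}$ is positive. (I assume the subscript ${\cal G}_{\nu,b}$ in the statement is a typo for ${\cal G}_{\mu,b}$, matching the hypothesis ${\mathfrak Q}[\rho,{\mathfrak N}(\nu;a)]\overset{Q}\sqsubset{\mathfrak Q}[\rho,{\mathfrak N}(\mu;b)]$.)

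Next, invoking Proposition~\ref{pro250}(2), the hypothesis that the $Q$ function is finer gives the pointwise inequality
\begin{equation}
a\,{\cal L}_{\nu,a}(\ell;\rho)\;\le\;b\,{\cal L}_{\mu,b}(\ell;\rho)\qquad\text{for every }\ell=1,\ldots,d^2.
\end{equation}
Summing this over $\ell$ yields the required inequality between the two Lorenz sums, and inserting it into the reformulation above gives \eqref{97}. No further combinatorics or analysis of the ranking permutations is needed, because the pairing between fine and coarse ranks has already been packaged inside Proposition~\ref{pro250}(2).

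There is no real obstacle here; the only subtle point is verifying that the hidden bijection in the definition of $\overset{Q}\sqsubset$ (the one tying ${\cal L}_{\nu,a}(\ell;\rho)$ to ${\cal L}_{\mu,b}(\ell;\rho)$ via the $\ell$ smallest $Q$-values on each side) is preserved when we sum over $\ell$, which is automatic since the inequality holds termwise for every $\ell$. The positivity of $a$, $b$, and $\tfrac{2}{d^2+1}$ makes the rescaling and the direction of the inequality unambiguous, completing the proof.
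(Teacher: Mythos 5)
Your proof is correct and follows essentially the same route as the paper's: the paper likewise sums the pointwise inequality $a{\cal L}_{\nu,a}(\ell;\rho)\le b{\cal L}_{\mu,b}(\ell;\rho)$ from proposition \ref{pro250}(2) over $\ell$ and then rewrites the Lorenz sums in terms of the Gini index via Eq.~(\ref{85}). Your reading of ${\cal G}_{\nu,b}$ as a typo for ${\cal G}_{\mu,b}$ matches the intended statement.
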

\begin{proof}
From Eq.(\ref{82}) it follows that
\begin{eqnarray}\label{85}
a\sum _{\ell=1}^{d^2}{\cal L}_{\nu,a}(\ell;\rho)\le b\sum _{\ell=1}^{d^2}{\cal L}_{\mu,b}(\ell;\rho)
\end{eqnarray}
This can be written as
\begin{eqnarray}
\frac {d^2+1}{2}a[1-{\cal G}_{\nu,a}(\rho)]\le \frac {d^2+1}{2}b[1-{\cal G}_{\nu,b}(\rho)],
\end{eqnarray}
and from this follows Eq.(\ref{97}).
\end{proof}
We call  complementary Gini index the $1-{\cal G}_{\nu,a}(\rho)$.
Eq.(\ref{97}) shows that the complementary Gini index of a coarser $Q$ function is larger than the complementary Gini index of a finer $Q$ function, times a factor.

\begin{proposition}
A quantum state $\rho$ and its displacement $\rho^{\prime} =D(i)\rho [D(i)]^\dagger$ have the same Lorenz values, the same Gini index, and the same entropy of Eq.(\ref{ent}).
\end{proposition}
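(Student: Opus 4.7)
The plan is to reduce everything to a single structural observation: displacement of $\rho$ permutes the $d^2$ values of its $Q$-function among the subsets in the equivalence class ${\mathfrak N}(\nu;a)$, and all three quantities in the statement are invariant under such permutations.

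First, I would start from the identity already recorded in Eq.~(\ref{Q12}):
\begin{equation*}
Q_{\nu,a}(A+k;\rho^{\prime}) = {\rm Tr}[D(i)\rho D(i)^\dagger \Pi(A+k)] = {\rm Tr}[\rho\,\Pi(A+k-i)] = Q_{\nu,a}(A+k-i;\rho).
\end{equation*}
As $k$ ranges over ${\mathbb Z}_d\times {\mathbb Z}_d$, the shifted label $k-i$ also ranges over ${\mathbb Z}_d\times {\mathbb Z}_d$. Hence the multiset $\{Q_{\nu,a}(A+k;\rho^{\prime})\}_{k}$ coincides with the multiset $\{Q_{\nu,a}(A+k;\rho)\}_{k}$. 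This is the only substantive computation, and it is essentially already in the paper.

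Next I would treat the three invariants in turn. For the Lorenz values, the definition in Eq.~(\ref{71}) depends only on the sorted sequence of Q-values in Eq.~(\ref{order}); since $\rho$ and $\rho^{\prime}$ yield the same multiset, they yield the same sorted sequence, and therefore ${\cal L}_{\nu,a}(\ell;\rho^{\prime})={\cal L}_{\nu,a}(\ell;\rho)$ for each $\ell$. The Gini index then follows immediately from Eq.~(\ref{84}), since it is a function only of the Lorenz values. For the entropy defined in Eq.~(\ref{ent}), the probabilities $p_k=\tfrac{1}{da}Q_{\nu,a}(A+k;\rho)$ form the same multiset as the $p_k^{\prime}$ built from $\rho^{\prime}$, and the Shannon functional $-\sum p\log p$ is symmetric in its arguments.

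I do not anticipate a real obstacle here; the proposition is essentially a corollary of the closure property in Eq.~(\ref{33}) together with the fact that Lorenz values, the Gini index, and the entropy are all permutation-invariant symmetric functions of the Q-values. The only point that requires a little care is making clear that the bijection $k\mapsto k-i$ on ${\mathbb Z}_d\times {\mathbb Z}_d$ is precisely what aligns the two multisets, so that the argument does not depend on any particular choice of reference set $A$ or reference ordering $(B_1,\dots,B_{d^2})$ used to define the ranking permutation.
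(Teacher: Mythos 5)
Your proposal is correct and follows essentially the same route as the paper: both rest on the displacement identity in Eq.~(\ref{Q12}), which shows that the $Q$-values of $\rho^{\prime}$ are just those of $\rho$ relabelled by the shift $A\mapsto A+i$, after which the Lorenz values, the Gini index, and the entropy agree because each depends only on the (sorted) multiset of $Q$-values. Your phrasing in terms of the bijection $k\mapsto k-i$ and permutation-invariant functionals is a slightly more abstract packaging of exactly the ordering argument the paper writes out explicitly.
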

\begin{proof}
We have seen in Eq.~(\ref{Q12}) that 
 \begin{eqnarray}
Q_{\nu,a}(A;\rho)=Q_{\nu,a}(A^{\prime};\rho^{\prime});\;\;\;A^{\prime}=A+i.
\end{eqnarray}
Therefore when we order the $Q$ function for $\rho$ as
\begin{eqnarray}
Q_{\nu,a}(A_1;\rho)\le Q_{\nu,a}(A_2;\rho)\le ...\le Q_{\nu,a}(A_{d^2};\rho),
\end{eqnarray}
the corresponding ordering of the $Q$ function for $\rho^{\prime}$ is
\begin{eqnarray}
Q_{\nu,a}(A^{\prime}_1;\rho^{\prime})\le Q_{\nu,a}(A^{\prime}_2;\rho^{\prime})\le ...\le Q_{\nu,a}(A^{\prime}_{d^2};\rho^{\prime});\;\;\;A^{\prime}_r=A_r+i.
\end{eqnarray}
Consequently
\begin{eqnarray}
{\cal L}_{\nu,a}(\ell;\rho)&=&\frac{1}{da}[Q_{\nu,a}(A_1;\rho)+Q_{\nu,a}(A_2;\rho)+...+Q_{\nu,a}(A_{\ell};\rho)]\nonumber\\&=&
\frac{1}{da}[Q_{\nu,a}(A^{\prime}_1;\rho^{\prime})+Q_{\nu,a}(A^{\prime}_2;\rho^{\prime})+...+Q_{\nu,a}(A^{\prime}_{\ell};\rho^{\prime})]={\cal L}_{\nu,a}(\ell;\rho^{\prime})
\end{eqnarray}
We have proved that the states $\rho, \rho ^{\prime}$ have the same Lorenz values and therefore they have the same Gini index.
From Eq.~(\ref{Q12}) we easily see that they also have the same entropy  of Eq.(\ref{ent}).

\end{proof}

\begin{proposition}\label{pro67}
The Gini index is a subadditive function:
\begin{eqnarray}
{\cal G}_{\nu,a}[p\rho+(1-p)\sigma]\le p{\cal G}_{\nu,a}(\rho)+(1-p){\cal G}_{\nu,a}(\sigma);\;\;\;0\le p\le 1.
\end{eqnarray}
If $\rho, \sigma$ are comonotonic density matrices then the following additivity property holds:
\begin{eqnarray}\label{QA2}
{\cal G}_{\nu,a}[p\rho+(1-p)\sigma]=p{\cal G}_{\nu,a}(\rho)+(1-p){\cal G}_{\nu,a}(\sigma);\;\;\;0\le p\le 1.
\end{eqnarray}
\end{proposition}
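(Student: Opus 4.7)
The plan is to deduce this directly from Proposition \ref{pro66} using the linear relationship between the Gini index and the sum of Lorenz values, given by the identity ${\cal G}_{\nu,a}(\rho) = 1 - \frac{2}{d^2+1}\sum_{\ell=1}^{d^2}{\cal L}_{\nu,a}(\ell;\rho)$ already established in Eq.~(\ref{85}). Since ${\cal G}_{\nu,a}$ is an affine-decreasing function of $\sum_\ell {\cal L}_{\nu,a}(\ell;\rho)$, superadditivity of the Lorenz values translates into subadditivity of the Gini index, and additivity for comonotonic pairs carries over as equality.

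Concretely, I would first apply Proposition \ref{pro66} to each $\ell \in \{1,\dots,d^2\}$ to get
\begin{equation}
{\cal L}_{\nu,a}[\ell;p\rho+(1-p)\sigma] \ge p{\cal L}_{\nu,a}(\ell;\rho) + (1-p){\cal L}_{\nu,a}(\ell;\sigma).
\end{equation}
Summing over $\ell$, multiplying by the negative constant $-\frac{2}{d^2+1}$ (which flips the inequality), and adding $1 = p + (1-p)$ to both sides, I obtain
\begin{equation}
1 - \frac{2}{d^2+1}\sum_{\ell=1}^{d^2}{\cal L}_{\nu,a}[\ell;p\rho+(1-p)\sigma] \le p\left(1-\frac{2}{d^2+1}\sum_\ell {\cal L}_{\nu,a}(\ell;\rho)\right) + (1-p)\left(1-\frac{2}{d^2+1}\sum_\ell {\cal L}_{\nu,a}(\ell;\sigma)\right),
\end{equation}
which is precisely ${\cal G}_{\nu,a}[p\rho+(1-p)\sigma] \le p{\cal G}_{\nu,a}(\rho) + (1-p){\cal G}_{\nu,a}(\sigma)$.

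For the second statement, I would invoke the additivity part of Proposition \ref{pro66}: if $\rho$ and $\sigma$ are comonotonic (i.e., belong to the same class ${\mathfrak R}(\nu,a,\pi)$), then ${\cal L}_{\nu,a}[\ell;p\rho+(1-p)\sigma] = p{\cal L}_{\nu,a}(\ell;\rho) + (1-p){\cal L}_{\nu,a}(\ell;\sigma)$ for every $\ell$. Substituting equalities in place of inequalities in the argument above yields Eq.~(\ref{QA2}) immediately.

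There is essentially no obstacle here: the entire content is a one-line consequence of superadditivity/additivity of the Lorenz values composed with the affine-decreasing functional that defines the Gini index. The only minor point to be careful about is the sign flip when multiplying by $-\frac{2}{d^2+1}$, and the fact that the normalization constant $\frac{2}{d^2+1}$ does not depend on $\rho$, so it passes transparently through the convex combination.
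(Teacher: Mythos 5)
Your proof is correct and follows exactly the route the paper intends: the paper's own proof of this proposition is simply the remark that it follows from Proposition \ref{pro66}, and your argument spells out that deduction via the affine relation ${\cal G}_{\nu,a}(\rho)=1-\frac{2}{d^2+1}\sum_{\ell=1}^{d^2}{\cal L}_{\nu,a}(\ell;\rho)$ of Eq.~(\ref{85}), with the sign flip handled properly. Nothing is missing.
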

\begin{proof}
This follows easily from proposition \ref{pro66}.
\end{proof}

\subsection{Gini operators}
We introduce the Gini operators associated with the equivalence class of density matrices ${\mathfrak R}(\nu,a,\pi)$, as
\begin{eqnarray}
{\mathfrak G}_{\nu,a,\pi}=\frac{2}{d^2+1}\sum _{\ell=1}^{d^2}\Lambda_{\nu,a,\pi}(\ell)=
\frac{2}{(d^2+1)da}[d^2\Pi(A_1)+(d^2-1)\Pi(A_2)+...+\Pi(A_{d^2})]
\end{eqnarray}
The ordering of the projectors is defined by $\pi$ (with $A_i=\pi(B_i)$).
The Gini operators change abruptly as $\pi$ changes when we move from one equivalence class to another.

The Gini operators  are positive semi-definite operators and 
\begin{eqnarray}\label{830}
{\cal G}_{\nu,a}(\rho)=1-{\rm Tr}(\rho{\mathfrak G}_{\nu,a, \pi});\;\;\;{\rm Tr}[{\mathfrak G}_{\nu,a,\pi}]=d.
\end{eqnarray}
Here $\pi$ is the ranking permutation of the density matrix $\rho$.
We note that for the borderline   density matrices in $R\setminus {\mathfrak R}$, we can use any of the ranking permutations 
of $\rho$ (they will give the same result). 

We have explained earlier that the values of the $Q$ function are measurable (using different ensembles of the same density matrix $\rho$, because the projectors do not commute)
and in this sense both the Lorenz values and the Gini index are measurable quantities.

\subsection{Examples}

We consider a quantum system with variables in  ${\mathbb Z}_3$ and the orthonormal basis of position states
 \begin{eqnarray}
\ket{X;0}=\begin{pmatrix}
1\\
0\\
0\\
\end{pmatrix};\;\;\;
\ket{X;1}=\begin{pmatrix}
0\\
1\\
0\\
\end{pmatrix};\;\;\;
\ket{X;2}=\begin{pmatrix}
0\\
0\\
1\\
\end{pmatrix}.
\end{eqnarray}
For coherent states we will use the fiducial vector
 \begin{eqnarray}
\ket{s}=\frac{1}{\sqrt {14}}\begin{pmatrix}
1\\
2\\
3\\
\end{pmatrix}.
\end{eqnarray}
Let $\rho$ be the density matrix
\begin{eqnarray}\label{dm}
\rho=\frac{1}{3}\begin{pmatrix}
1&1&0\\
1&1&0\\
0&0&1\\
\end{pmatrix}
\end{eqnarray}
which describes a mixed state.
We have  calculated the `standard' $Q$ and $P$ functions (i.e., for all subsets of the phase space with cardinality $1$). Numerical results are shown in table \ref{t1}.
The entropy of Eq.~(\ref{ent}) is in this case $E_{1,1}=2.1195$.

We also calculated the $Q_{\nu,2}(\{i,j\};\rho)$ of Eq.(\ref{Q}), the $\delta Q_{\nu,2}(\{i,j\};\rho)$ of Eq.(\ref{456}), and the $P_{\nu,2}(\{i,j\};\rho)$ of Eq.(\ref{ff}).
Results for the  ${\mathfrak n}(2)/9=4$
equivalence classes ${\mathfrak N}(\nu,2)$ are shown in tables \ref{t2}, \ref{t3}, \ref{t4}, \ref{t5}.
As we explained earlier the  $\delta Q_{\nu,2}(\{i,j\};\rho)$  quantify quantum interference, and their average value within an equivalence class is zero (Eq.(\ref{67})).
The entropies of Eq.~(\ref{ent}) are in these cases 
\begin{eqnarray}
E_{1,2}=2.1790;\;\;\;E_{2,2}=2.1785;\;\;\;E_{3,2}=2.1724;\;\;\;E_{4,2}=2.1724, 
\end{eqnarray}
correspondingly.

Then we ordered the sets in each of the equivalence classes ${\mathfrak N}(1,1)$, ${\mathfrak N}(1,2)$, ${\mathfrak N}(2,2)$, ${\mathfrak N}(3,2)$, ${\mathfrak N}(4,2)$, so that the corresponding $Q$-function is in ascending order as in Eq.~(\ref{order}). 
The corresponding  Lorenz functions are shown in table \ref{t6}, and the corresponding Gini indices are
\begin{eqnarray}
{\cal G}_{1,1}(\rho)=0.1952;\;\;\;{\cal G}_{1,2}(\rho)=0.1920;\;\;\;{\cal G}_{2,2}(\rho)=0.0838;\;\;\;{\cal G}_{3,2}(\rho)=0.1140;\;\;\;{\cal G}_{4,2}(\rho)=0.1140.
\end{eqnarray}
It is easily seen that the inequalities of Eqs.(\ref{82}),(\ref{97}) hold.
\begin{remark}
Table \ref{t6} shows that for the particular example that we considered we have ${\cal L}_{3,2}(\ell;\rho)={\cal L}_{4,2}(\ell;\rho)$, but this is not a general result.
\end{remark}

\begin{table}
\caption{The `standard' $Q$ and $P$ functions  (i.e., for sets in the equivalence class ${\mathfrak N}(1,1)$)  for the  density matrix in Eq.(\ref{dm}). The entropy of Eq.(\ref{ent}) is $E_{1,1}=2.1195$.}
\def\arraystretch{2}
\begin{tabular}{|c|c|c|c|c|c|c|c|c|c|}\hline
$i$&$(0,0)$&$(0,1)$&$(0,2)$&$(1,0)$&$(1,1)$&$(1,2)$&$(2,0)$&$(2,1)$&$(2,2)$\\\hline
$Q_{1,1}(i;\rho)$&$0.4286$&$0.4762$&$0.6190$&$0.2857$&$0.2619$&$0.1905$&$0.2857$&$0.2619$&$0.1905$\\\hline
$P_{1,1}(i;\rho)$&$-0.1935$&$0.0458$&$0.7638$&$0.2634$&$0.1437$&$-0.2152$&$0.2634$&$0.1437$&$-0.2152$\\\hline
\end{tabular} \label{t1}
\end{table}
\begin{table}
\caption{The equivalence class  ${\mathfrak N}(1,2)$ of subsets of the phase space ${\mathbb Z}_3\times {\mathbb Z}_3$. 
For each pair of points the values of $Q_{1,2}(\{i,j\};\rho), \delta Q_{1,2}(\{i,j\};\rho), P_{1,2}(\{i,j\};\rho)$ for the density matrix in Eq.(\ref{dm}), are  shown.  The entropy of Eq.(\ref{ent}) is $E_{1,2}=2.1790$.}
\def\arraystretch{2}
\begin{tabular}{|c|c|c|c|}\hline
${\mathfrak N}(1;2)$&$Q_{1,2}(\{i,j\};\rho)$&$\delta Q_{1,2}(\{i,j\};\rho)$&$P_{1,2}(\{i,j\};\rho)$\\\hline
$\{(0,0),(0,1)\}$&$0.3467$&$-0.5581$&$0.0102$\\\hline
$\{(0,1),(0,2)\}$&$0.9867$&$-0.1085$&$0.1527$\\\hline
$\{(0,2),(0,0)\}$&$0.6667$&$-0.3809$&$0.5088$\\\hline
$\{(1,0),(1,1)\}$&$0.3467$&$-0.2009$&$0.0782$\\\hline
$\{(1,1),(1,2)\}$&$0.9867$&$0.5343$&$0.0070$\\\hline
$\{(1,2),(1,0)\}$&$0.6667$&$0.1905$&$-0.1711$\\\hline
$\{(2,0),(2,1)\}$&$0.3467$&$-0.2009$&$0.0782$\\\hline
$\{(2,1),(2,2)\}$&$0.9867$&$0.5343$&$0.0070$\\\hline
$\{(2,2),(2,0)\}$&$0.6667$&$0.1905$&$-0.1711$\\\hline
\end{tabular} \label{t2}
\end{table}
\begin{table}
\caption{The equivalence class  ${\mathfrak N}(2,2)$ of subsets of the phase space ${\mathbb Z}_3\times {\mathbb Z}_3$. 
For each pair of points the values of $Q_{2,2}(\{i,j\};\rho), \delta Q_{2,2}(\{i,j\};\rho), P_{2,2}(\{i,j\};\rho)$ for the  density matrix in Eq.~(\ref{dm}), are shown.  The entropy of Eq.~(\ref{ent}) is $E_{2,2}=2.1785$.}
\def\arraystretch{2}
\begin{tabular}{|c|c|c|c|}\hline
${\mathfrak N}(2;2)$&$Q_{2,2}(\{i,j\};\rho)$&$\delta Q_{2,2}(\{i,j\};\rho)$&$P_{2,2}(\{i,j\};\rho)$ \\\hline
$\{(0,0),(1,0)\}$&$0.7891$&$0.0748$&$0.4084$\\\hline
$\{(0,1),(1,1)\}$&$0.7843$&$0.0462$&$0.1060$\\\hline
$\{(0,2),(1,2)\}$&$0.7075$&$-0.1020$&$-0.1965$\\\hline
$\{(1,0),(2,0)\}$&$0.4218$&$-0.1496$&$-0.6502$\\\hline
$\{(1,1),(2,1)\}$&$0.5034$&$-0.0204$&$-0.0453$\\\hline
$\{(1,2),(2,2)\}$&$0.5850$&$0.2040$&$0.5597$\\\hline
$\{(2,0),(0,0)\}$&$0.7891$&$0.0748$&$0.4084$\\\hline
$\{(2,1),(0,1)\}$&$0.7843$&$0.0462$&$0.1060$\\\hline
$\{(2,2),(0,2)\}$&$0.7075$&$-0.1020$&$-0.1965$\\\hline
\end{tabular} \label{t3}
\end{table}
\begin{table}
\caption{The equivalence class  ${\mathfrak N}(3,2)$ of subsets of the phase space ${\mathbb Z}_3\times {\mathbb Z}_3$. 
For each pair of points the values of $Q_{3,2}(\{i,j\};\rho), \delta Q_{3,2}(\{i,j\};\rho), P_{3,2}(\{i,j\};\rho)$ for the  density matrix in Eq.~(\ref{dm}), are shown.  The entropy of Eq.~(\ref{ent}) is $E_{3,2}=2.1724$.}
\def\arraystretch{2}
\begin{tabular}{|c|c|c|c|}\hline
 ${\mathfrak N}(3;2)$&$Q_{3,2}(\{i,j\};\rho)$&$\delta Q_{3,2}(\{i,j\};\rho)$&$P_{3,2}(\{i,j\};\rho)$\\\hline
$\{(0,0),(1,2)\}$&$0.6138$&$-0.0053$&$-0.3139$\\\hline
$\{(0,1),(1,0)\}$&$0.8270$&$0.0651$&$0.2143$\\\hline
$\{(0,2),(1,1)\}$&$0.7887$&$-0.0922$&$0.3892$\\\hline
$\{(1,0),(2,2)\}$&$0.4882$&$0.0120$&$-0.0615$\\\hline
$\{(1,1),(2,0)\}$&$0.4499$&$-0.0977$&$0.1134$\\\hline
$\{(1,2),(2,1)\}$&$0.5373$&$0.0849$&$-0.1661$\\\hline
$\{(2,0),(0,2)\}$&$0.8980$&$-0.0067$&$0.5421$\\\hline
$\{(2,1),(0,0)\}$&$0.7231$&$0.0326$&$-0.1610$\\\hline
$\{(2,2),(0,1)\}$&$0.6740$&$0.0073$&$-0.0564$\\\hline
\end{tabular} \label{t4}
\end{table}
\begin{table}
\caption{The equivalence class  ${\mathfrak N}(4,2)$ of subsets of the phase space ${\mathbb Z}_3\times {\mathbb Z}_3$. 
For each pair of points the values of $Q_{4,2}(\{i,j\};\rho), \delta Q_{4,2}(\{i,j\};\rho), P_{4,2}(\{i,j\};\rho)$ for the  density matrix in Eq.~(\ref{dm}), are shown.  The entropy of Eq.~(\ref{ent}) is $E_{4,2}=2.1724$.}
\def\arraystretch{2}
\begin{tabular}{|c|c|c|c|}\hline
${\mathfrak N}(4;2)$&$Q_{4,2}(\{i,j\};\rho)$&$\delta Q_{4,2}(\{i,j\};\rho)$&$P_{4,2}(\{i,j\};\rho)$\\\hline
$\{(0,0),(1,1)\}$&$0.7231$&$0.0326$&$-0.1610$\\\hline
$\{(0,1),(1,2)\}$&$0.6740$&$0.0073$&$-0.0564$\\\hline
$\{(0,2),(1,0)\}$&$0.8980$&$-0.0067$&$0.5421$\\\hline
$\{(1,0),(2,1)\}$&$0.4499$&$-0.0977$&$0.1134$\\\hline
$\{(1,1),(2,2)\}$&$0.5373$&$0.0849$&$-0.1661$\\\hline
$\{(1,2),(2,0)\}$&$0.4882$&$0.0120$&$-0.0615$\\\hline
$\{(2,0),(0,1)\}$&$0.8270$&$0.0651$&$0.2143$\\\hline
$\{(2,1),(0,2)\}$&$0.7887$&$-0.0922$&$0.3892$\\\hline
$\{(2,2),(0,0)\}$&$0.6138$&$-0.0053$&$-0.3139$\\\hline
\end{tabular} \label{t5}
\end{table}
\begin{table}
\caption{
Ordering of  the sets in each of the equivalence classes ${\mathfrak N}(1,1)$, ${\mathfrak N}(1,2)$, ${\mathfrak N}(2,2)$, ${\mathfrak N}(3,2)$, ${\mathfrak N}(4,2)$, so that the corresponding $Q$-function is in ascending order as in Eq.(\ref{order}). 
The corresponding  Lorenz functions are shown.}
\def\arraystretch{2}
\begin{tabular}{|c|c||c|c||c|c||c|c||c|c|}\hline
 \multicolumn{2}{|c||}{ ${\mathfrak N}(1,1)$}& \multicolumn{2}{c||}{ ${\mathfrak N}(1,2)$}& \multicolumn{2}{c||}{ ${\mathfrak N}(2,2)$}& \multicolumn{2}{c||}{ ${\mathfrak N}(3,2)$}& \multicolumn{2}{c|}{ ${\mathfrak N}(4,2)$}\\\hline
$i$&${\cal L}_{1,1}(\ell ;\rho)$&$\{i,j\}$&${\cal L}_{1,2}(\ell ;\rho)$&$\{i,j\}$&${\cal L}_{2,2}(\ell ;\rho)$&$\{i,j\}$&${\cal L}_{3,2}(\ell ;\rho)$&$\{i,j\}$&${\cal L}_{4,2}(\ell ;\rho)$\\\hline
$(1,2)$&$0.0635$&\{(0,0),(0,1)\}&$0.0578$&\{(1,0),(2,0)\}&$0.0703$&\{(1,1),(2,0)\}&$0.0750$&\{(1,0),(2,1)\}&$0.0750$\\\hline
$(2,2)$&$0.1270$&\{(1,0),(1,1)\}&$0.1155$&\{(1,1),(2,1)\}&$0.1542$&\{(1,0),(2,2)\}&$0.1563$&\{(1,2),(2,0)\}&$0.1563$\\\hline
$(1,1)$&$0.2143$&\{(2,0),(2,1)\}&$0.1733$&\{(1,2),(2,2)\}&$0.2517$&\{(1,2),(2,1)\}&$0.2459$&\{(1,1),(2,2)\}&$0.2459$\\\hline
$(2,1)$&$0.3016$&\{(0,2),(0,0)\}&$0.2844$&\{(0,2),(1,2)\}&$0.3696$&\{(0,0),(1,2)\}&$0.3482$&\{(2,2),(0,0)\}&$0.3482$\\\hline
$(1,0)$&$0.3968$&\{(1,2),(1,0)\}&$0.3956$&\{(2,2),(0,2)\}&$0.4875$&\{(2,2),(0,1)\}&$0.4605$&\{(0,1),(1,2)\}&$0.4605$\\\hline
$(2,0)$&$0.4921$&\{(2,2),(2,0)\}&$0.5067$&\{(0,1),(1,1)\}&$0.6182$&\{(2,1),(0,0)\}&$0.5810$&\{(0,0),(1,1)\}&$0.5810$\\\hline
$(0,0)$&$0.6349$&\{(0,1),(0,2)\}&$0.6711$&\{(2,1),(0,1)\}&$0.7489$&\{(0,2),(1,1)\}&$0.7125$&\{(2,1),(0,2)\}&$0.7125$\\\hline
$(0,1)$&$0.7937$&\{(1,1),(1,2)\}&$0.8356$&\{(0,0),(1,0)\}&$0.8804$&\{(0,1),(1,0)\}&$0.8503$&\{(2,0),(0,1)\}&$0.8503$\\\hline
$(0,2)$&$1$&\{(2,1),(2,2)\}&$1$&\{(2,0),(0,0)\}&$1$&\{(2,0),(0,2)\}&$1$&\{(0,2),(1,0)\}&$1$\\\hline
\end{tabular} \label{t6}
\end{table}

\section{Summary}

We have considered quantum systems with variables in ${\mathbb Z}_d$, where $d$ is a prime odd number. In this general context, we considered multi-dimensional coherent subspaces. We have shown that
the set of all coherent subspaces is partitioned into equivalence classes, with $d^2$ subspaces in each class.
The coherent projectors in a given  equivalence class have the closure property under displacements, and the resolution of the identity. 

The projectors related to two-dimensional coherent subspaces obey the `non-additivity' inequality in Eq.~(\ref{dfg}).
This  led to `non-additivity operators' which are related to the non-commutativity of the projectors and quantify quantum interference in phase space.

Based on coherent projectors, we introduced generalized $Q$ and $P$ functions of density matrices.
The ordering of the values of the $Q$ function led to the ranking permutation of a density matrix.
Comonotonic density matrices have the same ranking permutation.

The Lorenz values and the Gini index provide a novel approach to phase space methods in quantum mechanics.
They quantify the inequality in the distribution of the $Q$ function of a quantum state.
The Lorenz values are a superadditive function and the Gini index a subadditive function (propositions \ref{pro66}, \ref{pro67}). 
For comonotonic density matrices they are both  additive quantities.
A comparison of  the Lorenz values and the Gini index in the cases of coarse and fine coverings of the Hilbert space, is made in propositions \ref{pro250}, \ref{pro251}.
Various examples have been given that demonstrate these ideas. 
 
The work brings novel concepts in  the area of phase space methods, firstly by using multi-dimensional coherent subspaces and coherent projectors, and secondly by using 
the Lorenz values and the Gini index to quantify inequalities in the distribution of the $Q$ function of a quantum state.

\end{document}